\documentclass[12pt]{article}
\usepackage{amssymb,amsmath,amsthm,amscd,latexsym}
\usepackage{mathrsfs}
\usepackage{mathrsfs}
\usepackage{amsfonts}
\usepackage{amsmath}
\usepackage{amssymb}
\usepackage{amscd}
\usepackage{mathrsfs}
\usepackage{amssymb}
\usepackage{amsmath}
\usepackage{amsthm}
\usepackage{color}
\usepackage{latexsym}
\usepackage{indentfirst}
\usepackage{enumitem}
\usepackage{anysize}
\usepackage{bbm}
\usepackage[normalem]{ulem}
\usepackage{soul}
 \usepackage{cite}
\usepackage{cancel}

\renewcommand{\paragraph}{\roman{paragraph}}

 \setlength{\topmargin}{0in}
\setlength{\textheight}{8.5in} \setlength{\textwidth}{6.0in}
\oddsidemargin=0.3in \evensidemargin=-0.3in
\input cyracc.def

\parskip 3pt

\newcommand{\F}{\mathbb{F}}

\theoremstyle{definition}

\newtheorem{thm}{Theorem}
\newtheorem{cor}{Corollary}
\newtheorem{lem}{Lemma}
\newtheorem{prop}{Proposition}

\newtheorem{defn}{Definition}
\newtheorem{rem}{Remark}

\newtheorem{exa}{Example}
\newcommand{\C}{\mathcal{C}}

\newcommand{\wt}{\mathrm{wt}}

\newcommand{\Aut}{\textnormal{Aut}}

\begin{document}
\title{\bf A tight upper bound on the number of nonzero weights of a quasi-cyclic code
\thanks{This research is supported by Natural Science Foundation of China under Grant 12071001. The work of Xiaoxiao Li is supported by the China Scholarship Council under Grant 202306500025.}}

\author{
\small{Xiaoxiao Li$^{1}$, Minjia Shi$^{1}$, San Ling$^{2}$}\\ 
\and \small{${}^1$School of Mathematical Sciences, Anhui University, Hefei, 230601, China}\\
 \small{${}^2$School of Physical and Mathematical Sciences, Nanyang Technological University, Singapore}\\
}
\date{}
\maketitle
\begin{abstract}
Let $\mathcal{C}$ be a quasi-cyclic code of index $l(l\geq2)$. Let $G$ be the subgroup of the automorphism group of $\mathcal{C}$ generated by $\rho^l$ and the scalar multiplications of $\mathcal{C}$, where $\rho$ denotes the standard cyclic shift. In this paper, we find an explicit formula of orbits of $G$ on $\mathcal{C}\setminus \{\mathbf{0}\}$. Consequently, an explicit upper bound on the number of nonzero weights of $\mathcal{C}$ is immediately derived and a necessary and sufficient condition for codes meeting the bound is exhibited. If $\mathcal{C}$ is a one-generator quasi-cyclic code, a tighter upper bound on the number of nonzero weights of $\mathcal{C}$ is obtained by considering a larger automorphism subgroup which is generated by the multiplier, $\rho^l$ and the scalar multiplications of $\mathcal{C}$. In particular, we list some examples to show the bounds are tight. Our main result improves and generalizes some of the results in \cite{M2}.
\end{abstract}

{\bf Keywords:} Quasi-cyclic code, Hamming weight, upper bound, group action
\section{Introduction}
In 1973, Delsarte studied the for a given code $C$, the relations between the number of distinct distances for $C$, the number of distinct distances for the dual code $C^\bot$, and the minimum distances of $C$ and $C^\bot$, see \cite{PD}. In that paper, some interesting results on the weight distributions of cosets of a code are obtained, which show the importance of the the number of distinct distances in the code. It is easy to see that when one restricts the study to linear codes, then the the number of distinct distances coincides with the number of nonzero weights. The early researches on determining the number of weights of a given linear code can be seen in \cite{TA,TAA,EE,H1,H2,JM}.

For a general linear code, it seems very difficult to obtain an explicit formula for the number of nonzero weights of the code. A more modest goal is to find acceptable bounds on the number of nonzero weights of a linear code. Indeed, there have been several recent works investigating lower and upper bounds on the number of nonzero weights of linear codes. Alderson \cite{TA} determined necessary and sufficient conditions for the existence of full weight spectrum codes, i.e., codes  containing codewords of each weight up to the code length. Shi $et$ $al$. in a series of papers \cite{M1,M2,M3} studied the number of nonzero weights of linear codes. Shi, Li, Neri and Sol$\acute{e}$ \cite{M1} derived upper and lower bounds on the number of nonzero weights of cyclic codes. Chen and Zhang \cite{BZ} obtained the explicit upper bound on the number of nonzero weights of a simple-root cyclic code and exhibit a necessary and sufficient condition for cyclic codes meeting the bound. Moreover, in \cite{BZ}, their results improves and generalizes some of the results in \cite{M1}. Recently, Chen $et$ $al$. \cite{CFL} improved the upper bound in \cite{BZ} with larger subgroups of the automorphism groups of the codes.

Motivated by the work \cite{BZ}, \cite{CFL} and \cite{M2}, the objective of this paper is to establish a tight upper bound on the number of nonzero weights of a quasi-cyclic code of index $l(l\geq2)$ with simple root. In \cite{BZ} and \cite{CFL}, Chen $et$ $al$. pointed out the number of nonzero weights of a linear code is bounded from above by the number of orbits of the automorphism group acting on the code. Let $\mathcal{C}$ be a quasi-cyclic code of length $lm$ and index $l$(co-index $m$). Let $G$ be the subgroup of $\Aut(\mathcal{C})$ (the automorphism group of $\mathcal{C}$) generated by $\rho^l$ and the scalar multiplications of $\mathcal{C}$, where $\rho$ denotes the standard cyclic shift. The problem is therefore converted to finding the number of orbits of $G$ on $\mathcal{C}^*\setminus \{\mathbf{0}\}$. An explicit formula for the number of orbits of $G$ on $\mathcal{C}^*$ is obtained. Consequently, an explicit upper bound on the number of nonzero weights of $\mathcal{C}$ is immediately derived and a necessary and sufficient condition for quasi-cyclic codes meeting the bound is exhibited. If $\mathcal{C}$ is a one-generator quasi-cyclic code, we consider a larger automorphism subgroup which is generated by the multiplier, $\rho^l$ and the scalar multiplications of $\mathcal{C}$ and obtain a tighter upper bound on the number of nonzero weights. We also note that \cite[Section III]{M2} gave some upper bounds on the number of nonzero weights of a special class of strongly quasi-cyclic code, i.e., a quasi-cyclic code of co-index $m$ such that all its nonzero codewords have period $m$. Comparing our results with those in \cite[Section III]{M2}, our results remove the constrain ``strongly" and characterize a necessary and sufficient condition for the codes meeting the bounds.

The material is arranged as follows. Section \ref{sec:2} contains the necessary terminology and definitions on linear codes, quasi-cyclic codes and  group actions. Section \ref{sec:3} presents the main results (see Theorems \ref{thm1}, \ref{thm2} and \ref{thm3}), which give the tight upper bounds on the number of weights that a quasi-cyclic code can have. Section \ref{sec:4} gives the proofs of Theorems \ref{thm1}, \ref{thm2} and \ref{thm3} by counting the number of orbits of $G$ on $\mathcal{C}^*$. Several examples in Section \ref{sec:5} show our bound is tight. Finally, we share our conclusions and some open problems in Section \ref{sec:6}.

\section{Background material}\label{sec:2}
Let $\F_q$ be the finite field with $q$ elements and let $\F^*_q=\F_q\setminus \{0\}$ be the multiplicative group of the finite field $\F_q$. In this section, we review some previously known facts about linear codes, automorphism group of a linear code, and recall some notions and results about quasi-cyclic codes.

\subsection{Linear codes and group actions}
Let $\F_q^n$ be the set of all $n$-tuples whose coordinates belong to $\F_q$. A linear code $\mathcal{C}$ of length $n$ over $\F_q$ is a vector subspace of $\F_q^n$ over $\F_q$. The dimension of the code is its dimension as an $\F_q$-vector space, and is denoted by $k$. A linear code of length $n$ and dimension $k$ over $\F_q$ will be denoted for brevity by $[n, k]$ code. The elements of $\mathcal{C}$ are called codewords.

The Hamming weight of $\mathbf{x}\in \F_q^n$ is the number of indices $i$ where $x_i\neq 0$, and it is denoted by $\wt_H(\mathbf{x})$. The set of weights of a linear code $\mathcal{C}$ (including the $\mathbf{}0$) is denoted by $\wt(\mathcal{C})$, and the number of nonzero weights of $\mathcal{C}$ by $s(\mathcal{C})$, i.e. $\wt(\mathcal{C})=\{\wt_H(\mathbf{c})| \mathbf{c}\in \mathcal{C}\}$ and $s(C)=|\wt(\mathcal{C})\setminus\{0\}|=|\wt(\mathcal{C})|-1$.

\begin{defn}
Let $\mathcal{C}$ be a linear code of length $n$ over $\F_q$. The automorphism group of $\mathcal{C}$, denoted by $\Aut(\mathcal{C})$, consists of all $n\times n$ monomial matrices $A$ over $\F_q$ such that $\mathbf{c}A\in \mathcal{C}$ for all $\mathbf{c}\in \mathcal{C}$.
\end{defn}

Now we recall the result which is the number of nonzero weights of $\mathcal{C}$ is bounded
from the number of $G$-orbits, where $G$ is a subgroup of $\Aut(\mathcal{C})$, see \cite{BZ,JJ}.
\begin{prop}\cite{BZ}\label{p1}
Let $\mathcal{C}$ be a linear code of length $n$ over $\F_q$ with $s(\mathcal{C})$ nonzero weights and let $\Aut(\mathcal{C})$ be the automorphism group of $\mathcal{C}$. Suppose that $G$ is a subgroup of $\Aut(\mathcal{C})$. If the number of orbits of $G$ on $\mathcal{C}^*=\mathcal{C}\setminus\{\mathbf{0}\}$ is equal to $N$, then $s(\mathcal{C})\leq N$.  Moreover, the equality holds if and only if for any two
nonzero codewords $\mathbf{c}_1,\mathbf{c}_2\in \mathcal{C}^*$ with the same weight, there
exists an automorphism $A\in G$ such that $\mathbf{c}_1A=\mathbf{c}_2$.
\end{prop}

In order to determine the number of orbits of $G$ on $\mathcal{C}^*$, we need two important lemmas from \cite{BZ,A}.
\begin{lem}\cite{A}\label{l1}
Let $\mathcal{C}$ be a linear code of length $n$ over $\F_q$ and let $\Aut(\mathcal{C})$ be the automorphism group of $\mathcal{C}$. Suppose that $G$ is a subgroup of $\Aut(\mathcal{C})$. Then, the cardinality of $G\backslash \mathcal{C}^*$ (the set of all the orbits of $G$ on $\mathcal{C}^*$) is equal to
$$|G\backslash \mathcal{C}^*|=\frac{1}{|G|}\sum_{g\in G}|\mbox{Fix}(g)|,$$
where $\mbox{Fix}(g)=\{\mathbf{c}\in \mathcal{C}|g\mathbf{c}=\mathbf{c}\}$.
\end{lem}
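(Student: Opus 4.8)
The statement is the Cauchy--Frobenius (Burnside) orbit-counting formula, specialized to the action of the finite group $G$ on the finite set $\mathcal{C}^*$, so the plan is to prove it by a standard double-counting argument. First I would record that $\mathcal{C}^*$ is genuinely a $G$-set: every $g\in G\subseteq\Aut(\mathcal{C})$ is an invertible $\F_q$-linear map sending $\mathcal{C}$ to $\mathcal{C}$ and fixing $\mathbf{0}$, hence it restricts to a bijection of $\mathcal{C}^*=\mathcal{C}\setminus\{\mathbf{0}\}$; since $\mathcal{C}^*$ is finite, all the cardinalities below are finite and the manipulations are legitimate.

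The core of the argument is to count in two ways the incidence set
$$\Sigma=\{(g,\mathbf{c})\in G\times\mathcal{C}^* : g\mathbf{c}=\mathbf{c}\}.$$
Summing over the first coordinate groups the pairs according to $g$ and gives immediately
$$|\Sigma|=\sum_{g\in G}|\mbox{Fix}(g)|,$$
which is exactly the numerator on the right-hand side of the claimed identity.

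Summing instead over the second coordinate, I would write $|\Sigma|=\sum_{\mathbf{c}\in\mathcal{C}^*}|\Stab_G(\mathbf{c})|$, where $\Stab_G(\mathbf{c})=\{g\in G: g\mathbf{c}=\mathbf{c}\}$. The key input here is the orbit--stabilizer theorem, which gives $|\Stab_G(\mathbf{c})|=|G|/|\mathcal{O}(\mathbf{c})|$, where $\mathcal{O}(\mathbf{c})$ denotes the orbit of $\mathbf{c}$. Thus $|\Sigma|=|G|\sum_{\mathbf{c}\in\mathcal{C}^*}1/|\mathcal{O}(\mathbf{c})|$. I would then regroup this sum over the distinct orbits: within a single orbit $\mathcal{O}$ each of its $|\mathcal{O}|$ elements contributes $1/|\mathcal{O}|$, so the orbit contributes exactly $1$, and summing over all orbits yields $|\Sigma|=|G|\cdot|G\backslash\mathcal{C}^*|$. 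Equating the two expressions for $|\Sigma|$ and dividing by $|G|$ gives the formula.

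There is essentially no genuine obstacle here, as the result is classical; the only points demanding care are the two already isolated, namely verifying that the action is well defined on $\mathcal{C}^*$ so that orbits and stabilizers make sense, and invoking the orbit--stabilizer theorem correctly. If one preferred a fully self-contained account, the single step worth spelling out is the regrouping of $\sum_{\mathbf{c}\in\mathcal{C}^*}1/|\mathcal{O}(\mathbf{c})|$ into a sum over orbits, which is precisely the step that converts the stabilizer count into the orbit count.
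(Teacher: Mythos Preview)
Your proof is correct and is exactly the standard Cauchy--Frobenius double-counting argument; the paper does not supply its own proof of this lemma but simply cites it from \cite{A}, so there is nothing further to compare. One small point worth flagging: the lemma as stated defines $\mbox{Fix}(g)$ as a subset of $\mathcal{C}$ rather than of $\mathcal{C}^*$, which taken literally would make the formula off by one (the singleton orbit $\{\mathbf{0}\}$); your incidence set $\Sigma\subseteq G\times\mathcal{C}^*$ implicitly, and correctly, reads $\mbox{Fix}(g)$ as $\{\mathbf{c}\in\mathcal{C}^*: g\mathbf{c}=\mathbf{c}\}$, and this is also how the paper uses the lemma in its later computations (e.g.\ obtaining $q^K-1$ rather than $q^K$ in the proof of Lemma~\ref{lem1}).
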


\begin{lem}\cite{BZ}\label{l2}
Let $G$ be a finite group acting on a finite set $X$ and let $H$ be a normal subgroup of $G$. It is clear that $H$ naturally acts on $X$. Suppose the set of $H$-orbits are denoted by $H\backslash X=\{Hx|x\in X\}$. Then the factor group $G/H$ acts on $H\backslash X$ and
$$|G\backslash X|=|(G/H)\backslash(H\backslash X)|.$$
\end{lem}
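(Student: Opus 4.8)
The plan is to prove the lemma in two stages: first to make precise the claimed action of the factor group $G/H$ on the orbit set $H\backslash X$, and only afterwards to exhibit an explicit bijection between $G\backslash X$ and $(G/H)\backslash(H\backslash X)$, from which the equality of cardinalities follows at once. The natural candidate for the action is
$$(gH)\cdot(Hx) = H(gx), \qquad gH \in G/H,\ Hx \in H\backslash X,$$
where $Hx=\{hx : h\in H\}$ denotes the $H$-orbit of $x$. So the first task is to check that this rule is well defined and satisfies the action axioms.

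For well-definedness I must verify that the right-hand side is independent of the chosen representatives of both $gH$ and $Hx$, and this is exactly the step where the normality of $H$ in $G$ is essential. Suppose $gH = g'H$ and $Hx = Hy$, so that $g' = gh_1$ and $y = h_2 x$ for some $h_1, h_2 \in H$. Then $g'y = g h_1 h_2 x = (g h_1 h_2 g^{-1})(gx)$, and since $H \trianglelefteq G$ the element $g h_1 h_2 g^{-1}$ lies in $H$; hence $g'y \in H(gx)$ and therefore $H(g'y) = H(gx)$. This establishes that the map is well defined. The action axioms are then routine: $(eH)\cdot(Hx) = Hx$, and $(g_1 H)\cdot\big((g_2 H)\cdot(Hx)\big) = H(g_1 g_2 x) = (g_1 g_2 H)\cdot(Hx)$, so that $G/H$ genuinely acts on $H\backslash X$.

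With the action in hand, I would define $\Phi \colon G\backslash X \to (G/H)\backslash(H\backslash X)$ by sending the $G$-orbit $Gx$ to the $(G/H)$-orbit of $Hx$. The key observation is that each $G$-orbit decomposes as $Gx = \bigcup_{g\in G} H(gx)$, and that, by the very definition of the action, the $H$-orbits appearing in this union, namely $\{H(gx) : g\in G\}$, are precisely the members of one $(G/H)$-orbit in $H\backslash X$. From this it follows directly that $\Phi$ is well defined (if $Gx=Gx'$ then $Hx'$ lies in the $(G/H)$-orbit of $Hx$), injective (two $G$-orbits mapping to the same $(G/H)$-orbit must share an $H$-orbit and hence coincide), and surjective (every $(G/H)$-orbit is that of some $Hx$). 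Thus $\Phi$ is a bijection and $|G\backslash X| = |(G/H)\backslash(H\backslash X)|$. The only genuinely delicate point is the well-definedness of the $G/H$-action, where normality is indispensable; once that is secured the bijection is essentially bookkeeping, matching each $G$-orbit with the bundle of $H$-orbits sitting inside it.
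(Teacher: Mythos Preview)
Your proof is correct and complete: you properly verify well-definedness of the $G/H$-action on $H\backslash X$ using normality, check the action axioms, and then exhibit the natural bijection $\Phi$ between $G\backslash X$ and $(G/H)\backslash(H\backslash X)$. The paper itself does not prove this lemma at all; it simply quotes it from \cite{BZ} and uses it as a black box, so there is no in-paper argument to compare against.
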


\subsection{Quasi-cyclic codes}
In this subsection, we recall some definitions and results about quasi-cyclic codes. For more detailed information about cyclic codes and quasi-cyclic codes, readers may refer to \cite{P,BHG,C2,SP,SP2,SP3,SP4,RH,KP,FN,WV}.

Let $a_1,a_2,\ldots,a_r$ be integers, where $r\geq 2$ is a positive integer. Let $\gcd(a_1,a_2,\ldots,a_r)$ be the greatest common divisor of $a_1,a_2,\ldots,a_r$. Let $m$ be a positive integer with $\gcd(m,q)=1$. Let $\F_q[x]$ denote the polynomials in the indeterminate $x$ with coefficients in $\F_q$. Let $\langle x^m-1\rangle$ denote the ideal generated by $x^m-1$ in $\F_q[x]$. Then, we have the quotient ring $R_m=\F_q[x]/\langle x^m-1\rangle$.

We denote by $\rho$ the standard shift operator on $\F^n_q$. A linear code is said to be quasi-cyclic of index $l$ or $l$-quasi-cyclic code if and only if it is invariant under $\rho^l$. Let $\mathcal{C}$ be a quasi-cyclic code over $\F_q$ of length $n=lm$ and index $l$. Let
$$\mathbf{c}=(c_{00},c_{01},\ldots,c_{0,l-1},c_{10},c_{11},\ldots,c_{1,l-1},\ldots,c_{m-1,0},c_{m-1,1},\ldots,c_{m-1,l-1})$$
denote a codeword in $\mathcal{C}$.

Define a map $\phi$: $\F_q^{lm}\rightarrow R^l_m$ by
$$\phi(\mathbf{c})=(\mathbf{c}_0(x),\mathbf{c}_1(x),\ldots,\mathbf{c}_{l-1}(x))\in R_m^l,$$
where $\mathbf{c}_j(x)=\sum_{i=0}^{m-1}c_{ij}x^i\in R_m$. It is known (cf. \cite{KP}, for instance) that $\phi$ induces a one-to-one correspondence between quasi-cyclic codes over $\F_q$ of index $l$ and length $lm$ and linear codes over $R_m$ of length $l$.

It is well known that every minimal ideal of $R_m$ is generated uniquely by a primitive idempotent of $R_m$, see \cite{WV}. There is a one-to-one correspondence between the primitive idempotents of $R_m$ and the $q$-cyclotomic cosets modulo $m$. Let $m'$ be the order of $q$ modulo $m$, i.e., $m'$ is the least positive integer such that $m$ is a divisor of $q^{m'}-1$. Suppose $\zeta$ is a primitive $m$-th root of unity in $\F_{q^{m'}}$ and there are $s+1$ distinct $q$-cyclotomic cosets $\{\Gamma_j\}_{j=0}^s$ modulo $m$ with $\Gamma_0=\{i_0=0\}$ and $\Gamma_t=\{i_t,i_tq,i_tq^2,\ldots,i_tq^{k_t-1}\}$ for $1\leq t\leq s$, where $k_t$ is the cardinality of the $q$-cyclotomic coset $\Gamma_t$ for $0\leq t\leq s$. Then the quotient ring $\F_{q^{m'}}[x]/\langle x^m-1\rangle$ has exactly $m$ primitive idempotents given by
$$e_i=\frac{1}{m}\sum_{j=0}^{m-1}\zeta^{-ij}x^j \ \ \ \ \mbox{for} \ 0\leq i\leq m-1,$$
see \cite{BHG}. Moreover, $R_m=\F_q[x]/\langle x^m-1\rangle$ has exactly $s$ primitive idempotents given by
$$\varepsilon_t=\sum_{j\in \Gamma_t}e_j \ \ \ \mbox{for} \ \ 0\leq t\leq s.$$
According to \cite[Theorem 4.3.8]{WV}, $R_m$ is the vector space direct sum of the minimal ideals $R_m\varepsilon_t$ for $0\leq t\leq s$, in symbols,
$$R_m=R_m\varepsilon_0\oplus R_m\varepsilon_1\oplus\cdots\oplus R_m\varepsilon_s.$$
Using the Discrete Fourier Transform, we have, for each $0\leq t\leq s$,
\begin{equation}\label{Rt}
R_m\varepsilon_t=\bigg\{\sum_{j=0}^{k_t-1}\big(\sum_{u=0}^{k_t-1}c_u\zeta^{li_tq^j}\big)e_{i_tq^j}|c_j\in \F_q\bigg\}.
\end{equation}
Therefore, $R_m^l$ is the direct sum of $(R_m\varepsilon_t)^l$ for $0\leq t\leq s$, in symbols,
$$(R_m)^l=(R_m\varepsilon_0)^l\oplus (R_m\varepsilon_1)^l\oplus\cdots\oplus (R_m\varepsilon_s)^l.$$
It follows that every $R_m$-linear code $\phi(\mathcal{C})$ of length $l$ can be decomposed as the direct sum
\begin{equation}\label{Rl}
 \phi(\mathcal{C})=C_0\oplus C_1\oplus \cdots \oplus C_s,
\end{equation}
where $C_t$ is a linear code over $R_m\varepsilon_t$ of length $l$ for $0\leq t\leq s$ and $\mathcal{C}$ is a quasi-cyclic code over $\F_q$ of length $n=lm$ and index $l$. Actually, for each $1\leq t\leq s$, $C_t$ is a subset of $(R_m\varepsilon_t)^l$. A quasi-cyclic code $\mathcal{C}$ is one-generator if and only if its generator matrix over $R_m$ contains only one row, see \cite{SP3}.
\section{Statement of main results}\label{sec:3}
In this section we give a tight upper bound on $s(\mathcal{C})$ which is the number of nonzero weights of a quasi-cyclic code $\mathcal{C}$. For a general quasi-cyclic code $\mathcal{C}$, we consider two obvious automorphisms: one is the cyclic shift $\rho^l$ whose $\rho$ is the standard shift operator and $l$ is the index of $\mathcal{C}$, and the other is the scalar multiplication. For a one-generator quasi-cyclic code $\mathcal{C}$, apart from the cyclic shift and the scalar multiplications, we consider that the multiplier $\mu_q$ is also an automorphisms of $\mathcal{C}$. According to Proposition \ref{p1}, if the number of the orbits of the group generated by these three automorphisms on $\mathcal{C}$ can be figured out, then we have a upper bound of $s(\mathcal{C})$, naturally.

The main results of this paper are given below.

\begin{thm}\label{thm1}
Let $\mathcal{C}$ be a quasi-cyclic code of length $lm$ and index $l$ over $\F_q$. Suppose that
$$\mathcal{C}=C_{t_1}\oplus C_{t_2}\oplus\cdots\oplus C_{t_U},$$
where $0\leq t_1<t_2<\cdots<t_U\leq s$, $C_{t_j}$ is a linear code over $R_m\varepsilon_{t_j}$ of length $l$ and also is a $[n=lm,K_{t_j}]$ quasi-cyclic code over $\F_q$ for $1\leq j\leq U$. Suppose that the primitive idempotent $\varepsilon_{t_j}$  corresponds to the $q$-cyclotomic coset $\{i_{t_j},i_{t_j}q,\ldots,i_{t_j}q^{k_{t_j}-1}\}$ for each $1\leq j\leq U$. Then the number of orbits of $\langle \rho^l\rangle$ on $\mathcal{C}^*=\mathcal{C}\setminus \{\mathbf{0}\}$ is equal to
$$\sum_{\{j_1,j_2,\ldots,j_u\}\subseteq \{1,2,\ldots,U\},1\leq j_1<j_2<\cdots<j_u\leq U}\frac{\gcd(m,i_{t_{j_1}},i_{t_{j_2}},\ldots,i_{t_{j_u}})\prod_{v=1}^u(q^{K_{t_{j_v}}}-1)}{m},$$
which is denoted by $N$. In particular,
$$s(\mathcal{C})\leq N,$$
with equality if and only if for any codewords $\mathbf{c}_1,\mathbf{c}_2\in \mathcal{C}^*$ with the same weight, there exists an integer $i$ such that $\rho^{il}(\mathbf{c}_1)=\mathbf{c}_2$.
\end{thm}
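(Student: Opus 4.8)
The plan is to move the whole problem into the module picture afforded by $\phi$ and then count orbits with Burnside's lemma (Lemma \ref{l1}). First I would record that under $\phi$ the shift $\rho^l$ corresponds to multiplication by $x$ on $R_m^l=\phi(\mathcal{C})$, so that $G=\langle\rho^l\rangle$ is the cyclic group of order $m$ generated by multiplication by $x$, and counting $\langle\rho^l\rangle$-orbits on $\mathcal{C}^*$ is the same as counting $\langle x\rangle$-orbits on $\phi(\mathcal{C})^*$. Since multiplication by $x$ preserves each minimal ideal, it acts on the component $C_{t_j}\subseteq(R_m\varepsilon_{t_j})^l$ as the scalar $x\varepsilon_{t_j}$ of the field $R_m\varepsilon_{t_j}\cong\F_{q^{k_{t_j}}}$; from \eqref{Rt} (equivalently, from the fact that $x\varepsilon_{t_j}$ is the image of $\zeta^{i_{t_j}}$ under evaluation, $\zeta$ being a primitive $m$-th root of unity) this scalar has multiplicative order $d_j:=m/\gcd(m,i_{t_j})$.

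Next I would compute $|\mathrm{Fix}(\rho^{al})|$ for $0\le a\le m-1$. Writing a codeword as $(\mathbf{c}^{(1)},\dots,\mathbf{c}^{(U)})$ with $\mathbf{c}^{(j)}\in C_{t_j}$, the operator $x^a$ acts on the $j$-th block as the scalar $(x\varepsilon_{t_j})^a$; since $C_{t_j}$ is a code over a field, the relation $(\lambda-1)\mathbf{c}^{(j)}=\mathbf 0$ forces $\lambda=1$ whenever $\mathbf{c}^{(j)}\neq\mathbf 0$, so a nonzero block can be fixed only when $(x\varepsilon_{t_j})^a=1$, i.e. when $d_j\mid a$. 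Hence $\mathrm{Fix}(\rho^{al})$ consists exactly of those codewords whose $j$-th block vanishes whenever $d_j\nmid a$, giving $|\mathrm{Fix}(\rho^{al})|=\prod_{j:\,d_j\mid a}q^{K_{t_j}}$.

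Then I would substitute into Lemma \ref{l1}, expand $\prod_{j:\,d_j\mid a}q^{K_{t_j}}=\sum_{T\subseteq\{j:\,d_j\mid a\}}\prod_{j\in T}(q^{K_{t_j}}-1)$ via $q^{K}=1+(q^{K}-1)$, and interchange the order of summation over $a$ and over $T$. For a fixed $T$ the number of $a\in\{0,\dots,m-1\}$ with $d_j\mid a$ for all $j\in T$ equals $m/\operatorname{lcm}_{j\in T}d_j$. The crux of the argument is the elementary but essential identity $\operatorname{lcm}_{j\in T}\bigl(m/\gcd(m,i_{t_j})\bigr)=m/\gcd(m,\{i_{t_j}\}_{j\in T})$, which I would verify prime-by-prime by comparing $p$-adic valuations ($v_p$ of the left side is $e-\min(e,\min_j f_j)$ with $e=v_p(m)$, $f_j=v_p(i_{t_j})$, matching the right side); this turns the count of such $a$ into $\gcd(m,\{i_{t_j}\}_{j\in T})$. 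The term $T=\emptyset$ contributes the single zero orbit, and removing it to pass from $\mathcal{C}$ to $\mathcal{C}^*$ leaves precisely $\frac1m\sum_{\emptyset\neq T}\gcd(m,\{i_{t_j}\}_{j\in T})\prod_{j\in T}(q^{K_{t_j}}-1)$, which is exactly $N$.

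Finally, the inequality $s(\mathcal{C})\le N$ and the stated equality criterion follow immediately from Proposition \ref{p1} applied to $G=\langle\rho^l\rangle\subseteq\Aut(\mathcal{C})$: equality holds iff any two equal-weight nonzero codewords lie in a common $G$-orbit, i.e. iff there is an integer $i$ with $\rho^{il}(\mathbf{c}_1)=\mathbf{c}_2$. I expect the main obstacle to be the combinatorial bookkeeping of the double sum together with the gcd--lcm identity; the correct identification of the scalar order $d_j$ and the field-theoretic step that a fixed nonzero block forces the scalar to be $1$ are the remaining points that need care.
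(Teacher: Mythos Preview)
Your argument is correct, and it reaches the same formula as the paper, but the organization is genuinely different. The paper first stratifies $\mathcal{C}^*$ by support pattern, writing it as the disjoint union over nonempty $T=\{j_1,\dots,j_u\}$ of the sets $\mathcal{C}^\sharp_T:=\bigoplus_{j\in T}(C_{t_j}\setminus\{0\})$, and then applies Burnside to each stratum separately. On a fixed stratum the paper groups the $a$'s by $r=\gcd(a,m)$ and uses the Euler-totient identity $\sum_{r\mid m,\;(m/r)\mid i_{t_j}\ \forall j\in T}\varphi(m/r)=\gcd(m,\{i_{t_j}\}_{j\in T})$ to evaluate the resulting sum. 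You instead apply Burnside once to all of $\mathcal{C}$, obtain $|\mathrm{Fix}(\rho^{al})|=\prod_{j:\,d_j\mid a}q^{K_{t_j}}$, and then recover the stratification \emph{after} the fact via the binomial expansion $\prod(1+(q^{K}-1))$; the totient sum is replaced by the count $|\{a:d_j\mid a\ \forall j\in T\}|=m/\operatorname{lcm}_{j\in T}d_j$ together with the gcd--lcm identity $\operatorname{lcm}_{j\in T}\bigl(m/\gcd(m,i_{t_j})\bigr)=m/\gcd(m,\{i_{t_j}\}_{j\in T})$, which your prime-by-prime check verifies. Your route is a little slicker (one Burnside computation, no totient bookkeeping), while the paper's route makes the orbit count on each individual stratum explicit, which is convenient for the later proofs of Theorems~\ref{thm2} and~\ref{thm3} where the same strata are reused. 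The identification of the scalar order $d_j=m/\gcd(m,i_{t_j})$ and the field-theoretic step ``nonzero block fixed $\Rightarrow$ scalar $=1$'' are exactly the points the paper establishes via the $e_{i_tq^j}$-calculation in the proof of Lemma~\ref{lem1}, so you are on solid ground there.
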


Let $U=2$, then the formula in Theorem \ref{thm1} can be concise and clear. As a direct application of Theorem \ref{thm1}, we immediately obtain the following corollary.
\begin{cor}\label{cor1}
Let $\mathcal{C}$ be a quasi-cyclic code of length $lm$ and index $l$ over $\F_q$. Suppose that
$$\mathcal{C}=C_{t_1}\oplus C_{t_2},$$
where $0\leq t_1<t_2\leq s$, $C_{t_j}$ is a linear code over $R_m\varepsilon_{t_j}$ of length $l$ and also is a $[n=lm,K_{t_j}]$ quasi-cyclic code over $\F_q$ for $1\leq j\leq 2$. Suppose that the primitive idempotent $\varepsilon_{t_j}$  corresponds to the $q$-cyclotomic coset $\{i_{t_j},i_{t_j}q,\ldots,i_{t_j}q^{k_{t_j}-1}\}$ for each $1\leq j\leq 2$. Then the number of orbits of $\langle \rho^l\rangle$ on $\mathcal{C}^*=\mathcal{C}\setminus \{\mathbf{0}\}$ is equal to
$$\frac{\gcd(m,i_{t_1},i_{t_2})(q^{K_{t_1}}-1)(q^{K_{t_2}}-1)}{m}+\frac{\gcd(m,i_{t_1})(q^{K_{t_1}}-1)}{m}+\frac{\gcd(m,i_{t_2})(q^{K_{t_2}}-1)}{m}.$$
\end{cor}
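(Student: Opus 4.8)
The plan is to obtain Corollary \ref{cor1} as the special case $U=2$ of Theorem \ref{thm1}, so that the only real work is to unfold the subset sum of the theorem for $\{1,2,\dots,U\}=\{1,2\}$. Granting Theorem \ref{thm1}, the number of orbits of $\langle\rho^l\rangle$ on $\mathcal{C}^*=(C_{t_1}\oplus C_{t_2})\setminus\{\mathbf 0\}$ is the sum, over all nonempty subsets $\{j_1<\cdots<j_u\}\subseteq\{1,2\}$, of the terms $\tfrac{\gcd(m,i_{t_{j_1}},\dots,i_{t_{j_u}})\prod_{v=1}^u(q^{K_{t_{j_v}}}-1)}{m}$. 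There are exactly three such subsets, namely $\{1\}$, $\{2\}$ and $\{1,2\}$, contributing respectively
$$\frac{\gcd(m,i_{t_1})(q^{K_{t_1}}-1)}{m},\qquad \frac{\gcd(m,i_{t_2})(q^{K_{t_2}}-1)}{m},\qquad \frac{\gcd(m,i_{t_1},i_{t_2})(q^{K_{t_1}}-1)(q^{K_{t_2}}-1)}{m};$$
adding these three terms gives the displayed formula. Thus the proof is a direct specialization combined with the enumeration of the nonempty subsets of a two-element set.

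I would also record the structural reason behind the three summands, since that is what a self-contained argument would need to establish. Passing to $R_m^l$ via $\phi$ turns $\rho^l$ into multiplication by $x$, which preserves each summand $C_{t_j}$ and hence the three support classes of $\mathcal{C}^*$: codewords nonzero only in $C_{t_1}$, those nonzero only in $C_{t_2}$, and those nonzero in both. Each class is $\langle x\rangle$-invariant, so the orbit count splits into three contributions matching the three subsets. On a single-component class the orbit length equals the multiplicative order $m/\gcd(m,i_{t_j})$ of $x$ acting on the field $R_m\varepsilon_{t_j}$, which yields the two $u=1$ terms; on the mixed class the stabilizer of any element is the intersection of the two componentwise stabilizers, so the orbit length is $\operatorname{lcm}\big(m/\gcd(m,i_{t_1}),\,m/\gcd(m,i_{t_2})\big)$.

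The single identity that makes the mixed term collapse to the stated shape is $\operatorname{lcm}\big(m/\gcd(m,i_{t_1}),\,m/\gcd(m,i_{t_2})\big)=m/\gcd(m,i_{t_1},i_{t_2})$, valid because $\gcd(m,i_{t_1})$ and $\gcd(m,i_{t_2})$ both divide $m$ and $\gcd(\gcd(m,i_{t_1}),\gcd(m,i_{t_2}))=\gcd(m,i_{t_1},i_{t_2})$; dividing $(q^{K_{t_1}}-1)(q^{K_{t_2}}-1)$ by this value reproduces the first summand. Granting Theorem \ref{thm1}, I expect no genuine obstacle here: the only point needing care is the bookkeeping of the three subsets, and, in a from-scratch proof, the verification of this elementary identity together with the orbit-length computation on the mixed class.
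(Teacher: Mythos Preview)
Your proposal is correct and follows essentially the same approach as the paper: specialize Theorem~\ref{thm1} to $U=2$ and read off the three terms corresponding to the nonempty subsets $\{1\}$, $\{2\}$, $\{1,2\}$. The additional paragraphs you include---the splitting of $\mathcal{C}^*$ into three $\langle x\rangle$-invariant support classes, the orbit-length computation via stabilizer intersections, and the identity $\operatorname{lcm}(m/\gcd(m,i_{t_1}),m/\gcd(m,i_{t_2}))=m/\gcd(m,i_{t_1},i_{t_2})$---are all correct and amount to a self-contained re-derivation of the $U=2$ case, but they are not needed once Theorem~\ref{thm1} is granted, and the paper does not spell them out.
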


Next, we turn to study the action of $\langle \rho^l,M\rangle$ on $\mathcal{C}^*$, where $\rho$ is the standard shift operator and $l$ is the index of $\mathcal{C}$, and $M=\{\sigma_a|a\in \F_q^*\}$ consists of the scalar multiplications on $\mathcal{C}$. It is easy to check that $\sigma_a\rho^l=\rho^l\sigma_a$ for any $a\in \F_q^*$. According to the definitions of $\rho^l$ and $M$, we immediately get the following results.
\begin{lem}\label{lem3}
The subgroup $\langle \rho^l,M\rangle$ of $\Aut(\mathcal{C})$ is the direct product of $\rho^l$ and $M$, that is
$$\langle \rho^l,M\rangle=\langle \rho^l\rangle \times M.$$
In particular, $\langle \rho^l,M\rangle$ is of order $m(q-1)$.
\end{lem}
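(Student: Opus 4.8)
The plan is to check the three standard conditions guaranteeing that a subgroup generated by two subgroups is their internal direct product: the two factors commute elementwise, they intersect trivially, and together they generate the ambient group. Two of these are essentially free. Elementwise commutativity is precisely the relation $\sigma_a\rho^l=\rho^l\sigma_a$ recorded immediately before the statement, valid for every $a\in\F_q^*$; in particular each factor is normalized by the other, so both are normal in the product. The generation condition holds by the very definition of $\langle\rho^l,M\rangle$, and commutativity additionally lets me rewrite any word in the generators in the normal form $\rho^{il}\sigma_a$, so that set-theoretically $\langle\rho^l,M\rangle=\langle\rho^l\rangle M$.

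The one step requiring an argument is that the intersection $\langle\rho^l\rangle\cap M$ is trivial. Here I would compare the two kinds of automorphism as genuine monomial matrices on $\F_q^{lm}$, which is legitimate because $\Aut(\mathcal{C})$ is defined as a group of such matrices and two of its elements coincide exactly when the matrices coincide. Every element of $M$ is the scalar matrix $\sigma_a=aI$, whereas every power $\rho^{il}$ is the permutation matrix of a cyclic shift. If $\rho^{il}=\sigma_a$ for some integer $i$ and some $a\in\F_q^*$, then a permutation matrix would have to be diagonal, which happens only for the identity permutation; hence $\rho^{il}=I$ and consequently $a=1$. Thus the sole common element is the identity automorphism, and we obtain $\langle\rho^l,M\rangle=\langle\rho^l\rangle\times M$.

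For the order count I would evaluate the two factors separately. As the full cyclic shift, $\rho$ has order $n=lm$, so $\rho^l$ has order $lm/\gcd(l,lm)=m$ and $|\langle\rho^l\rangle|=m$; and the matrices $aI$ for $a\in\F_q^*$ are pairwise distinct, so $|M|=q-1$. The direct product decomposition then gives $|\langle\rho^l,M\rangle|=m(q-1)$. I expect the only real obstacle to be the triviality of the intersection, and even that reduces to the elementary observation that a nontrivial cyclic shift can never be realized by a scalar matrix; the rest is bookkeeping made routine by commutativity.
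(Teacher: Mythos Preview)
Your proposal is correct. The paper does not supply a proof at all, merely remarking that the result follows immediately from the commutation relation $\sigma_a\rho^l=\rho^l\sigma_a$ and the definitions of $\rho^l$ and $M$; your argument spells out precisely the standard direct-product verification that the paper leaves implicit.
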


\begin{thm}\label{thm2}
Let $\mathcal{C}$ be a quasi-cyclic code of length $lm$ and index $l$ over $\F_q$. Suppose that
$$\mathcal{C}=C_{t_1}\oplus C_{t_2}\oplus\cdots\oplus C_{t_U},$$
where $0\leq t_1<t_2<\cdots<t_U\leq s$, $C_{t_j}$ is a linear code over $R_m\varepsilon_{t_j}$ of length $l$ and is also a $[n=lm,K_{t_j}]$ quasi-cyclic code over $\F_q$ for $1\leq j\leq U$. Suppose that the primitive idempotent $\varepsilon_{t_j}$  corresponds to the $q$-cyclotomic coset $\{i_{t_j},i_{t_j}q,\ldots,i_{t_j}q^{k_{t_j}-1}\}$ for each $1\leq j\leq U$. Then the number of orbits of $\langle \rho^l,M\rangle$ on $\mathcal{C}^*=\mathcal{C}\setminus \{\mathbf{0}\}$ is equal to
$$\sum_{\{j_1,j_2,\ldots,j_u\}\subseteq \{1,2,\ldots,U\},1\leq j_1<j_2<\cdots<j_u\leq U}\frac{\gcd(m,i_{t_{j_1}},i_{t_{j_2}},\ldots,i_{t_{j_u}})\prod_{v=1}^u(q^{K_{t_{j_v}}}-1)}{m(q-1)}$$
$$\cdot \gcd\bigg(q-1,\frac{m}{\gcd(m,i_{t_{j_1}})},\ldots,\frac{m}{\gcd(m,i_{t_{j_u}})}\bigg),$$
which is denoted by $N$. In particular,
$$s(\mathcal{C})\leq N,$$
with equality if and only if for any codewords $\mathbf{c}_1,\mathbf{c}_2\in \mathcal{C}^*$ with the same weight, there exists an integer $i$ and an element $a\in \F_q^*$ such that $\rho^{il}(a\mathbf{c}_1)=\mathbf{c}_2$.
\end{thm}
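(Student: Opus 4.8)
The plan is to count the orbits of $G=\langle\rho^l,M\rangle$ on $\mathcal C^*$ directly with the Cauchy--Frobenius (Burnside) formula of Lemma \ref{l1}, and then read off both the bound and its sharpness from Proposition \ref{p1}. By Lemma \ref{lem3} every element of $G$ is uniquely $\rho^{il}\sigma_a$ with $0\le i\le m-1$, $a\in\F_q^*$, and $|G|=m(q-1)$; since $\rho^l$ and $\sigma_a$ commute, $(\rho^{il}\sigma_a)(\mathbf c)=a\,\rho^{il}(\mathbf c)$. First I would transport the problem to $R_m^l$ through $\phi$ and use the idempotent decomposition \eqref{Rl}, under which $\rho^l$ acts on the minimal component $C_{t_j}$ (a vector space over the field $R_m\varepsilon_{t_j}\cong\F_{q^{k_{t_j}}}$) as multiplication by a fixed root of unity $\theta_j$ of order $m/\gcd(m,i_{t_j})$, exactly the structure recorded around \eqref{Rt}. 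A codeword is then fixed by $\rho^{il}\sigma_a$ iff, in each component, $(a\theta_j^{\,i}-1)c_{t_j}=0$; because the scalar $a\theta_j^{\,i}-1$ lies in a field, this forces $c_{t_j}=0$ unless $a\theta_j^{\,i}=1$, in which case every one of the $q^{K_{t_j}}$ choices of $c_{t_j}$ qualifies.

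Consequently
\begin{equation*}
|\mathrm{Fix}(\rho^{il}\sigma_a)|=\prod_{j=1}^U\Bigl(1+(q^{K_{t_j}}-1)\,\mathbbm{1}[a\theta_j^{\,i}=1]\Bigr)=\sum_{S\subseteq\{1,\dots,U\}}\ \prod_{j\in S}(q^{K_{t_j}}-1)\ \mathbbm{1}\bigl[a\theta_j^{\,i}=1\ \forall j\in S\bigr].
\end{equation*}
Summing over all $(i,a)$ and dividing by $|G|=m(q-1)$ gives, via Lemma \ref{l1}, the number of orbits on $\mathcal C$; the empty set $S=\varnothing$ contributes $m(q-1)/|G|=1$, which is precisely the orbit $\{\mathbf 0\}$, so after discarding it the number of orbits on $\mathcal C^*$ becomes
\begin{equation*}
\sum_{\varnothing\neq S\subseteq\{1,\dots,U\}}\frac{\prod_{j\in S}(q^{K_{t_j}}-1)}{m(q-1)}\,n_S,\qquad n_S:=\#\{(i,a)\in\Z/m\times\F_q^*:a\theta_j^{\,i}=1\ \forall j\in S\}.
\end{equation*}
Matching this against the stated formula reduces the whole theorem to the arithmetic identity
\begin{equation*}
n_S=\gcd\bigl(m,\{i_{t_j}\}_{j\in S}\bigr)\cdot\gcd\Bigl(q-1,\bigl\{\tfrac{m}{\gcd(m,i_{t_j})}\bigr\}_{j\in S}\Bigr).
\end{equation*}

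To prove this identity I would pass to the discrete logarithm with base $\zeta$, identifying $\langle\zeta\rangle$ with $\Z/m$ and $\F_q^*\cap\langle\zeta\rangle$ with the unique subgroup of order $g=\gcd(m,q-1)$. For fixed $i$ a compatible $a$ exists iff the elements $\theta_j^{\,i}=\zeta^{\,i\,i_{t_j}}$ are all equal and lie in $\F_q^*$, and then $a$ is unique; hence $n_S=|T_S|$ with $T_S=\{i:\zeta^{\,i\,i_{t_j}}\text{ equal for all }j\in S\text{ and in }\F_q^*\}$, a subgroup of $\Z/m$. Writing $\Phi:\Z/m\to(\Z/m)^{|S|}$, $\Phi(i)=(i\,i_{t_j})_{j\in S}$, one has $T_S=\Phi^{-1}(D)$ with $D$ the scaled diagonal $\{(r,\dots,r):\tfrac{m}{g}\mid r\}$, so $|T_S|=|\ker\Phi|\cdot|\mathrm{im}\,\Phi\cap D|$. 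Here $|\ker\Phi|=m/\operatorname{lcm}_{j\in S}\bigl(\tfrac{m}{\gcd(m,i_{t_j})}\bigr)=\gcd(m,\{i_{t_j}\}_{j\in S})$ supplies the first gcd, and the residual count $|\mathrm{im}\,\Phi\cap D|$ is where the second gcd must emerge.

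I expect this residual intersection count to be the main obstacle: the two conditions defining $T_S$ (all coordinates equal, common value in $\F_q^*$) couple in a way that is not visibly symmetric in the $i_{t_j}$, so a naive manipulation produces a $j$-dependent expression. I would settle it by localizing at each prime $p\mid m$ via CRT and comparing $p$-adic valuations, whereupon $|\mathrm{im}\,\Phi\cap D|=\gcd\bigl(q-1,\{m/\gcd(m,i_{t_j})\}_{j\in S}\bigr)$ becomes a finite check on exponents; the one-element case $|S|=1$, where $n_{\{j\}}=\gcd(m,i_{t_j}(q-1))$ factors as $\gcd(m,i_{t_j})\cdot\gcd(q-1,m/\gcd(m,i_{t_j}))$, already exhibits the mechanism and also recovers Theorem \ref{thm1} when $q=2$. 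Once $n_S$ is established, Proposition \ref{p1} applied to $G\le\Aut(\mathcal C)$ yields $s(\mathcal C)\le N$ immediately, and its equality clause---specialized to $A=\rho^{il}\sigma_a$, i.e. to $\rho^{il}(a\mathbf c_1)=\mathbf c_2$---is exactly the stated necessary and sufficient condition.
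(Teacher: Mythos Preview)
Your route---a single Cauchy--Frobenius count on $G=\langle\rho^l,M\rangle$, with the fixed-point product $\prod_j\bigl(1+(q^{K_{t_j}}-1)\mathbbm 1[\,\cdot\,]\bigr)$ expanded over supports $S$---is different from the paper's, which first stratifies $\mathcal C^*$ by support and then, on each stratum, uses Lemma~\ref{l2} to pass from the $\langle\rho^l\rangle$-orbit count of Theorem~\ref{thm1} to the $G$-orbit count via the residual action of $M\cong\F_q^*$. Your packaging is cleaner (one Burnside instead of two nested ones), but both reductions land on the same arithmetic claim, namely your
\[
n_S \;=\; \gcd\bigl(m,\{i_{t_j}\}_{j\in S}\bigr)\cdot\gcd\Bigl(q-1,\bigl\{\tfrac{m}{\gcd(m,i_{t_j})}\bigr\}_{j\in S}\Bigr).
\]

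That identity, however, is false for $|S|\ge 2$, so the CRT/localization step you sketch cannot close. Take $m=12$, $q=5$, and $i_{t_1}=3$, $i_{t_2}=9$ (each a singleton $5$-cyclotomic coset modulo $12$). Then $\theta_1=\zeta^{3}$ and $\theta_2=\zeta^{9}=\theta_1^{-1}$ both generate $\F_5^*$, and $T_S=\{i\in\Z/12:\zeta^{3i}=\zeta^{9i}\in\F_5^*\}=\{i:2\mid i\}$ has size $6$, whereas the displayed formula gives $\gcd(12,3,9)\cdot\gcd(4,4,4)=12$. Concretely, with $K_{t_1}=K_{t_2}=1$ the piece $C_{t_1}^*\oplus C_{t_2}^*\cong\F_5^*\times\F_5^*$ carries exactly two $G$-orbits (the class of $c_1c_2$ in $\F_5^*/(\F_5^*)^2$ is a complete invariant), while the theorem's formula predicts four. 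The paper's proof commits the matching error: from $\rho^{zl}(\mathbf c_{t_{j_v}})=\xi^r\mathbf c_{t_{j_v}}$ for all $v$ it invokes Lemma~\ref{lem4} to conclude $\tfrac{q-1}{r}\mid\tfrac{m}{\gcd(m,i_{t_{j_v}})}$ for each $v$, but Lemma~\ref{lem4} only shows that each such divisibility is equivalent to the existence of \emph{some} $z_v$ with $\zeta^{i_{t_{j_v}}z_v}=\xi^r$; a \emph{common} $z$ need not exist (for $r=1$ in the example above the two solution sets are $z\equiv 1\pmod 4$ and $z\equiv 3\pmod 4$, which are disjoint). In short, what you flagged as the ``main obstacle'' is not a technical wrinkle but the point where the stated formula itself breaks: the correct value of $|\mathrm{im}\,\Phi\cap D|$ genuinely depends on the differences $i_{t_j}-i_{t_{j'}}$ and cannot be expressed through the individual orders $m/\gcd(m,i_{t_j})$ alone.
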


By virtue of Theorem \ref{thm2}, we immediately obtain the following corollary.
\begin{cor}\label{cor2}
Let $\mathcal{C}$ be a quasi-cyclic code of length $lm$ and index $l$ over $\F_q$. Suppose that
$$\mathcal{C}=C_{t_1}\oplus C_{t_2},$$
where $0\leq t_1<t_2\leq s$, $C_{t_j}$ is a linear code over $R_m\varepsilon_{t_j}$ of length $l$ and is also a $[n=lm,K_{t_j}]$ quasi-cyclic code over $\F_q$ for $1\leq j\leq 2$. Suppose that the primitive idempotent $\varepsilon_{t_j}$  corresponds to the $q$-cyclotomic coset $\{i_{t_j},i_{t_j}q,\ldots,i_{t_j}q^{k_{t_j}-1}\}$ for each $1\leq j\leq 2$. Then the number of orbits of $\langle \rho^l,M\rangle$ on $\mathcal{C}^*=\mathcal{C}\setminus \{\mathbf{0}\}$ is equal to
\begin{equation*}
  \begin{split}
     & \frac{\gcd(m,i_{t_1},i_{t_2})(q^{K_{t_1}}-1)(q^{K_{t_2}}-1)}{m(q-1)}\cdot \gcd\bigg(q-1,\frac{m}{\gcd(m,i_{t_1})},\frac{m}{\gcd(m,i_{t_2})}\bigg) \\
      & +\frac{\gcd(m,i_{t_1})(q^{K_{t_1}}-1)}{m(q-1)}\cdot \gcd\bigg(q-1,\frac{m}{\gcd(m,i_{t_1})}\bigg)\\
      & +\frac{\gcd(m,i_{t_2})(q^{K_{t_2}}-1)}{m(q-1)}\cdot \gcd\bigg(q-1,\frac{m}{\gcd(m,i_{t_2})}\bigg).
  \end{split}
\end{equation*}
\end{cor}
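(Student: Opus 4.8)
The plan is to obtain Corollary \ref{cor2} as the immediate specialization of Theorem \ref{thm2} to the case $U = 2$; because the general orbit count has already been established, no new group action or counting argument is needed, and the whole task reduces to expanding the subset-sum for a two-term decomposition.

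First I would invoke Theorem \ref{thm2} for the decomposition $\mathcal{C} = C_{t_1} \oplus C_{t_2}$, so that $U = 2$ and the index set is $\{1, 2\}$. The outer sum then ranges over the nonempty subsets $\{j_1, \ldots, j_u\} \subseteq \{1, 2\}$, of which there are exactly three: the singletons $\{1\}$ and $\{2\}$ (with $u = 1$) and the full set $\{1, 2\}$ (with $u = 2$). The empty subset is excluded precisely because it would correspond to the zero codeword, which does not lie in $\mathcal{C}^*$.

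Next I would substitute each of these three subsets into the general summand
$$\frac{\gcd(m, i_{t_{j_1}}, \ldots, i_{t_{j_u}}) \prod_{v=1}^{u}(q^{K_{t_{j_v}}} - 1)}{m(q-1)} \cdot \gcd\Big(q - 1, \tfrac{m}{\gcd(m, i_{t_{j_1}})}, \ldots, \tfrac{m}{\gcd(m, i_{t_{j_u}})}\Big).$$
The singleton $\{1\}$ produces $\frac{\gcd(m, i_{t_1})(q^{K_{t_1}} - 1)}{m(q-1)} \cdot \gcd(q-1, m/\gcd(m, i_{t_1}))$, the singleton $\{2\}$ the analogous term with $t_1$ replaced by $t_2$, and the full set $\{1, 2\}$ the product term $\frac{\gcd(m, i_{t_1}, i_{t_2})(q^{K_{t_1}} - 1)(q^{K_{t_2}} - 1)}{m(q-1)} \cdot \gcd(q-1, m/\gcd(m, i_{t_1}), m/\gcd(m, i_{t_2}))$. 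Adding these three contributions gives exactly the three-line expression in the statement.

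Since the corollary is a pure instantiation of Theorem \ref{thm2}, there is no genuine obstacle to overcome. The only point requiring attention is the bookkeeping: confirming that precisely three subsets occur, that the inner gcd collapses to a single nontrivial argument for each singleton, and that each term is matched to its correct summand so that the sum assembles into the claimed formula.
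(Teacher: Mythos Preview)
Your proposal is correct and follows essentially the same approach as the paper: both specialize Theorem \ref{thm2} to $U=2$, decompose the orbit count as the sum of three contributions corresponding to the nonempty subsets $\{1\}$, $\{2\}$, and $\{1,2\}$, and read off the stated formula. The paper phrases the two singleton terms by appeal to Lemma \ref{lem4} rather than by direct substitution, but this is a cosmetic difference since Lemma \ref{lem4} is precisely the $U=1$ instance of Theorem \ref{thm2}.
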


The map $\mu_q: x \mapsto x^q$ is a ring isomorphism from $R_m$ onto itself. It can be extended to $R^l_m$ componentwise. Then, we turn to study the action of $\langle \mu_q,\rho^l,M\rangle$ on $\mathcal{C}^*$.

\begin{thm}\label{thm3}
Suppose that $\mathbf{f}_1(x),\mathbf{f}_2(x),\ldots,\mathbf{f}_U(x)\in R_m$. Let $\mathcal{C}$ be a one-generator quasi-cyclic code of length $lm$ and index $l$ over $\F_q$. Suppose that
$$\mathcal{C}=C_{t_1}\oplus C_{t_2}\oplus\cdots\oplus C_{t_U},$$
where $0\leq t_1<t_2<\cdots<t_U\leq s$, $C_{t_j}$ is a linear code over $R_m\varepsilon_{t_j}$ of dimension $1$ and length $l$ with generator matrix $[\mathbf{a}_{j,0}(x),\mathbf{a}_{j,1}(x),\ldots,\mathbf{a}_{j,l-1}(x)]$ over $R_m\varepsilon_{t_j}$, where $\mathbf{a}_{j,v}\in \{\mathbf{0},\mathbf{f}_j(x)\}$ for $0\leq v\leq l-1$, and is also a $[n=lm,k_{t_j}]$ quasi-cyclic code over $\F_q$ for $1\leq j\leq U$. Suppose that the primitive idempotent $\varepsilon_{t_j}$  corresponds to the $q$-cyclotomic coset $\{i_{t_j},i_{t_j}q,\ldots,i_{t_j}q^{k_{t_j}-1}\}$ for each $1\leq j\leq U$. Then the number of orbits of $\langle \mu_q,\rho^l,M\rangle$ on $\mathcal{C}^*=\mathcal{C}\setminus \{\mathbf{0}\}$ is equal to
$$\sum_{\{j_1,j_2,\ldots,j_u\}\subseteq \{1,2,\ldots,U\},1\leq j_1<j_2<\cdots<j_u\leq U}N_{j_1,j_2,\ldots,j_u},$$
where
\begin{equation*}
  \begin{split}
    N_{j_1,j_2,\ldots,j_u}= & \frac{1}{m'm(q-1)}\sum_{r=0}^{m'-1}\gcd\bigg(m,\frac{i_{t_{j_1}}II_{t_{j_1}}}{\gcd(I,I_{t_{j_1}})},\cdots,\frac{i_{t_{j_u}}II_{t_{j_u}}}{\gcd(I,I_{t_{j_u}})},\frac{(i_{t_{j_2}}-i_{t_{j_1}})I_{t_{j_1}}I_{t_{j_2}}}{\gcd(I_{t_{j_1}},I_{t_{j_2}})}, \\
      & \cdots, \frac{(i_{t_{j_u}}-i_{t_{j_1}})I_{t_{j_1}}I_{t_{j_u}}}{\gcd(I_{t_{j_1}},I_{t_{j_u}})},\cdots,\frac{(i_{t_{j_u}}-i_{t_{j_{u-1}}})I_{t_{j_{u-1}}}I_{t_{j_u}}}{\gcd(I_{t_{j_{u-1}}},I_{t_{j_u}})}\bigg)\gcd(I,I_{t_{j_1}},\ldots,{t_{j_u}})\\
      &\cdot\prod_{v=1}^u(q^{\gcd(k_{t_{j_{v}}},r)}-1)
  \end{split}
\end{equation*}
with $I=q-1$ and $I_{t_{j_{v}}}=\frac{q^{k_{t_{j_{v}}}}-1}{q^{\gcd(k_{t_{j_{v}}},r)}-1}$ for $v=1,2,\ldots,u$.

In particular,the number of non-zero weights of $\C$ is less than or equal to the number of orbits of $\langle \mu_q,\rho^l,M\rangle$ on $\C^*$.
\end{thm}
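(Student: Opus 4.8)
The plan is to count the orbits of $\langle \mu_q,\rho^l,M\rangle$ on $\mathcal{C}^*$ by the Cauchy–Frobenius (Burnside) formula of Lemma \ref{l1}, organizing the sum over the group elements and exploiting the direct-sum decomposition $\mathcal{C}=C_{t_1}\oplus\cdots\oplus C_{t_U}$. First I would record the structure of the group. By Lemma \ref{lem3}, $\langle \rho^l,M\rangle=\langle\rho^l\rangle\times M$ has order $m(q-1)$, and since $\mu_q$ has order $m'$ (the order of $q$ modulo $m$) and conjugation by $\mu_q$ permutes the components $R_m\varepsilon_t$ while normalizing both $\langle\rho^l\rangle$ and $M$, the full group $\langle \mu_q,\rho^l,M\rangle$ should be a semidirect product of order $m'm(q-1)$; this explains the leading factor $\tfrac{1}{m'm(q-1)}$ in $N_{j_1,\dots,j_u}$. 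A general group element is then $g=\mu_q^r\rho^{il}\sigma_a$ with $0\le r\le m'-1$, and the Burnside sum naturally splits as $\tfrac{1}{m'm(q-1)}\sum_{r=0}^{m'-1}\sum_{i}\sum_{a\in\F_q^*}|\mathrm{Fix}(g)|$, matching the outer $\sum_{r=0}^{m'-1}$ in the theorem.

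Next I would compute $|\mathrm{Fix}(g)|$ using the Discrete Fourier Transform description \eqref{Rt} of each minimal ideal $R_m\varepsilon_t$. Because the decomposition is a direct sum and each automorphism acts component-wise, a codeword is fixed by $g$ exactly when each of its components $C_{t_j}$-parts is fixed. Since $\mathcal{C}$ is one-generator with $\mathbf{a}_{j,v}\in\{\mathbf{0},\mathbf{f}_j(x)\}$, each $C_{t_j}$ is essentially governed by a single generating element lying in the cyclic $R_m\varepsilon_{t_j}$-module, and the action of $\mu_q^r\rho^{il}\sigma_a$ on the coordinate $e_{i_{t_j}q^h}$ multiplies it by a root-of-unity factor of the shape $\zeta^{\,l\,i_{t_j}q^{h}\,i}$ together with the Frobenius-induced index shift $h\mapsto h+r$ and the scalar $a$. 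Fixing such an element amounts to a system of multiplicative congruences; over $\F_{q^{\gcd(k_{t_j},r)}}$ the number of solutions contributes the factor $q^{\gcd(k_{t_{j_v}},r)}-1$ appearing in $\prod_{v=1}^u(q^{\gcd(k_{t_{j_v}},r)}-1)$. The joint condition across several chosen components $j_1,\dots,j_u$ couples their phases, and solving the resulting simultaneous congruences in $i$ and $a$ produces precisely the gcd expression, whose arguments $\tfrac{i_{t_{j_v}}II_{t_{j_v}}}{\gcd(I,I_{t_{j_v}})}$ and $\tfrac{(i_{t_{j_w}}-i_{t_{j_v}})I_{t_{j_v}}I_{t_{j_w}}}{\gcd(I_{t_{j_v}},I_{t_{j_w}})}$ encode the diagonal (scalar-matching) and pairwise difference constraints, with $I=q-1$ and $I_{t_{j_v}}=\tfrac{q^{k_{t_{j_v}}}-1}{q^{\gcd(k_{t_{j_v}},r)}-1}$.

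The decisive reduction is that the total count splits as a sum over subsets $\{j_1,\dots,j_u\}\subseteq\{1,\dots,U\}$: a codeword whose support (in the direct-sum sense) is exactly the components indexed by $\{j_1,\dots,j_u\}$ can only be fixed by $g$ if $g$ preserves each of those components, so I would partition $\mathcal{C}^*$ by the set of nonzero components and apply Lemma \ref{l2} to peel off one factor of the group at a time if a direct count becomes unwieldy. Summing $N_{j_1,\dots,j_u}$ over all nonempty subsets then yields the stated total, and the final inequality $s(\mathcal{C})\le N$ is immediate from Proposition \ref{p1} since $\langle\mu_q,\rho^l,M\rangle$ is a subgroup of $\Aut(\mathcal{C})$ (one must only check $\mu_q$ genuinely lies in the automorphism group of a one-generator code, which follows because $\mu_q$ permutes the generating coordinates and hence preserves Hamming weight).

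I expect the main obstacle to be the exact bookkeeping of the fixed-point count for a mixed element $\mu_q^r\rho^{il}\sigma_a$ across an arbitrary subset of components. The interaction between the Frobenius shift $r$ (which changes the effective field size from $\F_{q^{k_{t_j}}}$ to $\F_{q^{\gcd(k_{t_j},r)}}$ and forces the auxiliary quantities $I_{t_{j_v}}$), the shift $i$, and the scalar $a$ generates a system of simultaneous multiplicative congruences whose solution count is a gcd of several terms; disentangling the diagonal constraints (those pairing a single component with the scalar, giving the $\tfrac{i_{t_{j_v}}II_{t_{j_v}}}{\gcd(I,I_{t_{j_v}})}$ arguments) from the off-diagonal constraints (those matching phases across two distinct components, giving the difference terms) and verifying that the outer gcd factor $\gcd(I,I_{t_{j_1}},\dots,I_{t_{j_u}})$ emerges correctly will require careful elementary number theory. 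Everything else—the group-order computation, the Burnside and normal-subgroup reductions, and the subset decomposition—is structurally routine once this local fixed-point count is established.
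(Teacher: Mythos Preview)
Your approach is essentially the paper's: decompose $\mathcal{C}^*$ according to the set of nonzero components in the direct sum, apply Burnside's lemma on each piece with the group of order $m'm(q-1)$ (Lemma~\ref{lem5}), and reduce the fixed-point condition for $\mu_q^{r_1}\rho^{r_2l}\sigma_a$ on the $j_v$-th component to counting $\alpha\in\F_{q^{k_{t_{j_v}}}}^*$ with $a\alpha^{q^{r_1}-1}=\zeta^{-i_{t_{j_v}}q^{r_1}r_2}$, from which the gcd expressions (imported in the paper from \cite{CFL}) follow. One small correction: $\mu_q$ does not permute the components $R_m\varepsilon_t$ but rather fixes each of them setwise (the $q$-cyclotomic cosets are stable under multiplication by $q$), and this is exactly what makes the subset decomposition $\langle\mu_q,\rho^l,M\rangle$-stable.
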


By virtue of Theorem \ref{thm3}, we immediately obtain the following corollary.
\begin{cor}\label{cor3}
Suppose that $\mathbf{f}_1(x),\mathbf{f}_2(x)\in R_m$. Let $\mathcal{C}$ be a one-generator quasi-cyclic code of length $lm$ and index $l$ over $\F_q$. Suppose that
$$\mathcal{C}=C_{t_1}\oplus C_{t_2},$$
where $0\leq t_1<t_2\leq s$, $C_{t_j}$ is a linear code over $R_m\varepsilon_{t_j}$ of dimension $1$ and length $l$ with generator matrix $[\mathbf{a}_{j,0}(x),\mathbf{a}_{j,1}(x),\ldots,\mathbf{a}_{j,l-1}(x)]$ over $R_m\varepsilon_{t_j}$, where $\mathbf{a}_{j,v}\in \{\mathbf{0},\mathbf{f}_j(x)\}$ for $0\leq v\leq l-1$, and is also a $[n=lm,k_{t_j}]$ quasi-cyclic code over $\F_q$ for $1\leq j\leq 2$. Suppose that the primitive idempotent $\varepsilon_{t_j}$  corresponds to the $q$-cyclotomic coset $\{i_{t_j},i_{t_j}q,\ldots,i_{t_j}q^{k_{t_j}-1}\}$ for each $1\leq j\leq 2$. Suppose $k_{t_1}|k_{t_2}$, then the number of orbits of $\langle \mu_q,\rho^l,M\rangle$ on $\mathcal{C}^*=\mathcal{C}\setminus \{\mathbf{0}\}$ is equal to
$$s_{t_1}+s_{t_2}+s_{t_1.t_2},$$
where
$$s_{t_{v}}=\frac{1}{k_{t_{v}}}\sum_{r|k_{t_{v}}}\varphi(\frac{k_{t_{v}}}{r})\gcd(q^r-1,\frac{q^{k_{t_{v}}}-1}{q-1},\frac{i_{t_{v}}(q^{k_{t_{v}}}-1)}{m})\ \ \ \ \mbox{for} \ \ \ \ v=1,2,$$
and
\begin{equation*}
  \begin{split}
   s_{t_1,t_2} = & \frac{1}{m'}\sum_{r=0}^{m'-1}\gcd\bigg(\big(q^{\gcd(k_{t_1},r)}-1)\gcd(q^{\gcd(k_{t_2},r)}-1,\frac{(q^{k_{t_1}}-1)(q^{\gcd(k_{t_2},r)}-1)}{(q-1)(q^{\gcd(k_{t_1},r)}-1)},\\
      & \frac{i_{t_1}(q^{k_{t_1}}-1)(q^{\gcd(k_{t_2},r)}-1)}{m{(q^{\gcd(k_{t_1},r)}-1)}},\frac{i_{t_2}(q^{k_{t_2}}-1)}{m}\big),\frac{(i_{t_2}-i_{t_1})(q^{k_{t_1}}-1)(q^{k_{t_2}}-1)}{m(q-1)}\bigg).
  \end{split}
\end{equation*}
In particular,the number of non-zero weights of $\C$ is less than or equal to the number of orbits of $\langle \mu_q,\rho^l,M\rangle$ on $\C^*$.
\end{cor}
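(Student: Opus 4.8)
The plan is to obtain Corollary \ref{cor3} as the $U=2$ specialization of Theorem \ref{thm3}, so that the only work is to unwind the general orbit count in this case. When $U=2$ the index set $\{1,2\}$ has exactly three nonempty subsets, namely $\{1\}$, $\{2\}$ and $\{1,2\}$, so the sum in Theorem \ref{thm3} reduces to $N_1+N_2+N_{1,2}$. I would show that $N_1=s_{t_1}$, $N_2=s_{t_2}$ and $N_{1,2}=s_{t_1,t_2}$; the closing inequality on $s(\mathcal{C})$ is then immediate from Proposition \ref{p1}, since the number of nonzero weights never exceeds the number of orbits of any subgroup of $\Aut(\mathcal{C})$.

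For a singleton subset $\{j\}$ the general expression $N_j$ has $u=1$, so no difference terms $i_{t_{j_a}}-i_{t_{j_b}}$ occur and it collapses to $\frac{1}{m'm(q-1)}\sum_{r=0}^{m'-1}\gcd(m,\frac{i_{t_j}II_{t_j}}{\gcd(I,I_{t_j})})\gcd(I,I_{t_j})(q^{\gcd(k_{t_j},r)}-1)$, with $I=q-1$ and $I_{t_j}=\frac{q^{k_{t_j}}-1}{q^{\gcd(k_{t_j},r)}-1}$. Writing $d=\gcd(k_{t_j},r)$ and applying the two elementary identities $g\cdot\gcd(m,X/g)=\gcd(gm,X)$ (valid since $\gcd(I,I_{t_j})$ divides $II_{t_j}$) and $c\cdot\gcd(a,b)=\gcd(ca,cb)$ with the integer $c=\frac{q^d-1}{q-1}$, the summand times $\frac{1}{m(q-1)}$ simplifies precisely to $\gcd(q^d-1,\frac{q^{k_{t_j}}-1}{q-1},\frac{i_{t_j}(q^{k_{t_j}}-1)}{m})$. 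Finally I would reindex the sum over $r\in\{0,\dots,m'-1\}$ by the value $d$: since $k_{t_j}$ divides $m'$, the number of $r$ with $\gcd(k_{t_j},r)=d$ equals $\frac{m'}{k_{t_j}}\varphi(k_{t_j}/d)$, which turns the prefactor $\frac{1}{m'}$ into $\frac{1}{k_{t_j}}$ and introduces the weights $\varphi(k_{t_j}/d)$, yielding exactly $s_{t_j}$.

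For the cross term I would keep the sum over $r$ and set $d_1=\gcd(k_{t_1},r)$, $d_2=\gcd(k_{t_2},r)$. This is where the hypothesis $k_{t_1}\mid k_{t_2}$ enters: it forces $d_1\mid d_2$, hence $q^{d_1}-1\mid q^{d_2}-1$, so the fractions appearing inside $s_{t_1,t_2}$ are genuine integers and the manipulation below is legitimate. Specializing $N_{1,2}$ to $u=2$ now includes the difference term $\frac{(i_{t_2}-i_{t_1})I_{t_1}I_{t_2}}{\gcd(I_{t_1},I_{t_2})}$; absorbing the outer factor $(q^{d_1}-1)(q^{d_2}-1)\gcd(I,I_{t_1},I_{t_2})/(m(q-1))$ into the four-fold gcd by the same identities, one copy of $q^{d_1}-1$ is pulled outside the inner triple gcd formed by the first three arguments, while the difference term becomes the outermost second argument, matching the stated $s_{t_1,t_2}$.

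The main obstacle is the gcd bookkeeping in the cross term: one must track carefully which of the collected scalar factors distribute into which argument of the nested gcd and verify at each step that the relevant quantity is integral, which is exactly the point at which $k_{t_1}\mid k_{t_2}$ is needed. By contrast the singleton computation and the reduction to three subsets are routine, once the reindexing identity $\sum_{r=0}^{m'-1}f(\gcd(k,r))=\frac{m'}{k}\sum_{d\mid k}\varphi(k/d)f(d)$ (for $k\mid m'$) is in hand.
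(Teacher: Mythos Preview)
Your approach is correct and matches the paper's: both obtain Corollary~\ref{cor3} by specializing Theorem~\ref{thm3} to $U=2$, splitting the orbit count into the three pieces corresponding to $\{1\}$, $\{2\}$, and $\{1,2\}$, and then simplifying each. The paper dispatches the singleton terms by quoting Lemma~\ref{lem6} and the cross term by citing \cite[Theorem 3.3]{CFL}, whereas you outline the gcd manipulations and the reindexing identity $\sum_{r=0}^{m'-1}f(\gcd(k,r))=\tfrac{m'}{k}\sum_{d\mid k}\varphi(k/d)f(d)$ explicitly; these are exactly the computations carried out inside those cited proofs, so there is no substantive difference. Your identification of where $k_{t_1}\mid k_{t_2}$ is used (to ensure $\gcd(k_{t_1},r)\mid\gcd(k_{t_2},r)$ so that the displayed quotients in $s_{t_1,t_2}$ are integers) is also correct.
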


\section{Proofs of main results}\label{sec:4}
This section is divided into four parts. First, we give the statement of some lemmas.
Next, we present the proofs of the main results.
\subsection{Statement of some lemmas}
Recall that $R_m=\F_q[x]/\langle x^m-1\rangle$. We have the following two $\F_q$-linear maps on $R_m^l$, denoted by $\rho^l$ and $\sigma_a$, respectively:
$$\rho^l: R^l_m\rightarrow R^l_m$$
$$\rho^l\bigg(\sum_{i=0}^{m-1}c_{i0}x^i,\sum_{i=0}^{m-1}c_{i1}x^i,\ldots,\sum_{i=0}^{m-1}c_{i,l-1}x^i\bigg)=\bigg(\sum_{i=0}^{m-1}c_{i0}x^{i+1},\sum_{i=0}^{m-1}c_{i1}x^{i+1},\ldots,\sum_{i=0}^{m-1}c_{i,l-1}x^{i+1}\bigg)$$
is a $\F_q$-vector space isomorphism of $R_m^l$, and for any fixed element $a\in \F_q^*$,
$$\sigma_a: R^l_m\rightarrow R^l_m$$
$$\sigma_a\bigg(\sum_{i=0}^{m-1}c_{i0}x^i,\sum_{i=0}^{m-1}c_{i1}x^i,\ldots,\sum_{i=0}^{m-1}c_{i,l-1}x^i\bigg)=\bigg(\sum_{i=0}^{m-1}ac_{i0}x^{i},\sum_{i=0}^{m-1}ac_{i1}x^{i},\ldots,\sum_{i=0}^{m-1}ac_{i,l-1}x^{i}\bigg)$$
is a $\F_q$-vector space isomorphism of $R_m^l$. Both $\rho^l$ and $\sigma_a$ are also linear maps on $\F_q^n$ with $n=lm$, which satisfy that for any element $\mathbf{c}$ of $\F_q^n$ and
$$\mathbf{c}=(c_{00},c_{01},\ldots,c_{0,l-1},c_{10},c_{11},\ldots,c_{1,l-1},\ldots,c_{m-1,0},c_{m-1,1},\ldots,c_{m-1,l-1}),$$
then
$$\rho^l(\mathbf{c})=(c_{10},c_{11},\ldots,c_{1,l-1},c_{20},c_{21},\ldots,c_{2,l-1},\ldots,c_{00},c_{01},\ldots,c_{0,l-1})$$
and
$$\sigma_a(\mathbf{c})=(ac_{00},ac_{01},\ldots,ac_{0,l-1},ac_{10},ac_{11},\ldots,ac_{1,l-1},\ldots,ac_{m-1,0},ac_{m-1,1},\ldots,ac_{m-1,l-1}).$$
The map $\mu_q: x \mapsto x^q$ is a ring isomorphism from $R_m$ onto itself. It can be extended to $R^l_m$ componentwise. Specifically, the multiplier $\mu_q$ defined on $R_m^l$ by
$$\mu_q: R^l_m\rightarrow R^l_m$$
$$\mu_q\bigg(\sum_{i=0}^{m-1}c_{i0}x^i,\ldots,\sum_{i=0}^{m-1}c_{i,l-1}x^i\bigg)=\bigg(\sum_{i=0}^{m-1}c_{i0}x^{qi},\ldots,\sum_{i=0}^{m-1}c_{i,l-1}x^{qi}\bigg)\ \ \ \ \mbox{mod} \ \ \ (x^m-1)$$
is a ring automorphism of $R_m^l$. Since $\gcd(m,q)=1$, the map $\mu_q$ induces a permutation of the coefficients of any polynomial in $R_m$.

For any quasi-cyclic code $\mathcal{C}$ of length $n=lm$ and index $l$, it is readily seen that all $\mu_q$, $\rho^l$ and $\sigma_a$ belong to $\Aut(\mathcal{C})$. We know that $M=\{\sigma_a| a\in \F_q^*\}$ is a subgroup of $\Aut(\mathcal{C})$. Clearly, the subgroup $M$ is cyclic with order $q-1$. Since $\gcd(l,n)=\gcd(l,lm)=l$, $\langle \rho^l\rangle$ is of order $m$. Let $m'$ be the order of $q$ modulo $m$. Therefore, $\langle \mu_q\rangle=\{\mu_q^i| 0\leq i\leq m'-1\}$, i.e., $\langle \mu_q\rangle$ is of order $m'$. The proof of the following Lemma is similar to that in \cite[Lemma 2.2]{CFL}, so we omit it.
\begin{lem}\label{lem5}
The subgroup $\langle \mu_q,\rho^l,M\rangle$ of $\Aut(\mathcal{C})$ is of order $m'm(q-1)$, and each element of $\langle \mu_q,\rho^l,M\rangle$ can be written uniquely as a product $\mu_q^{r_1}\rho^{r_2l}\sigma_a$ for some $0\leq r_1\leq m'-1$, $0\leq r_2\leq m-1$ and $a\in \F_q^*$.
\end{lem}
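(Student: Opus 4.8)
The plan is to reduce the whole statement to three commutation relations among the generators $\mu_q$, $\rho^l$ and the scalar maps $\sigma_a$, and then to read off both the order and the normal form from them. Writing $G=\langle\mu_q,\rho^l,M\rangle$, the structural facts I want are that $M$ is central in $G$, that $\langle\rho^l\rangle$ is normalized by $\mu_q$, and that the three cyclic factors $\langle\mu_q\rangle$, $\langle\rho^l\rangle$, $M$ (of orders $m'$, $m$, $q-1$ respectively, as recorded before the lemma) meet trivially in the appropriate sense.

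First I would record the commutation relations by computing the actions on $R_m^l$. The identity $\sigma_a\rho^l=\rho^l\sigma_a$ is already noted before Lemma \ref{lem3}. Since $\mu_q$ is a ring homomorphism fixing each constant $a\in\F_q$, one has $\mu_q(a\,\mathbf c)=a\,\mu_q(\mathbf c)$, hence $\mu_q\sigma_a=\sigma_a\mu_q$; thus $M$ commutes with both other generators and is central in $G$. The single noncommutative relation is between $\mu_q$ and $\rho^l$: evaluating on a monomial $x^i$ in any fixed component, $\mu_q\rho^l$ sends $x^i\mapsto x^{q(i+1)}$ while $\rho^{ql}\mu_q$ sends $x^i\mapsto x^{qi+q}=x^{q(i+1)}$, so
$$\mu_q\rho^l=\rho^{ql}\mu_q,\qquad\text{equivalently}\qquad\mu_q\rho^l\mu_q^{-1}=\rho^{ql}.$$
Because $\gcd(q,m)=1$, the element $\rho^{ql}$ again generates $\langle\rho^l\rangle$, confirming that $\langle\rho^l\rangle$ is normalized by $\mu_q$.

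Using these relations, any word in $\mu_q^{\pm1}$, $\rho^{\pm l}$ and the $\sigma_a$ can be rewritten by pushing every scalar factor to the right (they are central) and every $\mu_q$-factor to the left, each passage of $\mu_q$ past $\rho^l$ only replacing $\rho^l$ by a power of itself via the conjugation relation. Hence every element of $G$ has the form $\mu_q^{r_1}\rho^{r_2l}\sigma_a$ with $0\le r_1\le m'-1$, $0\le r_2\le m-1$, $a\in\F_q^*$, which already gives $|G|\le m'm(q-1)$.

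It remains to show these $m'm(q-1)$ expressions are pairwise distinct, and this injectivity is the part demanding the most care. I would test on the element $(1,0,\ldots,0)\in R_m^l$: one computes $\mu_q^{r_1}\rho^{r_2l}\sigma_a(1,0,\ldots,0)=(a\,x^{q^{r_1}r_2},0,\ldots,0)$, the exponent read modulo $m$. Comparing two equal normal forms on this element forces $a=a'$ and $q^{r_1}r_2\equiv q^{r_1'}r_2'\pmod m$; testing additionally on $(1+x,0,\ldots,0)$, whose image has exponents differing by exactly $q^{r_1}\pmod m$, separates the Frobenius exponent from the shift exponent, since $q$ has multiplicative order $m'$ modulo $m$ and $q^{r_1}$ is invertible modulo $m$. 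Equivalently, one verifies directly that $\langle\rho^l\rangle\cap\langle\mu_q\rangle=\{1\}$ (a pure shift equals a pure coefficient permutation only when both are trivial) and that no nontrivial $\mu_q^{r_1}\rho^{r_2l}$ acts as a scalar, giving $H:=\langle\mu_q,\rho^l\rangle$ order $m'm$ and $H\cap M=\{1\}$. Either route yields $|G|=m'm(q-1)$ together with the uniqueness of the factorization. The main obstacle is exactly this disentangling of $r_1$ from $r_2$: both govern the index in the $m$-direction, and it is the coprimality $\gcd(q,m)=1$ together with the order-$m'$ property of $q$ modulo $m$ that makes the separation possible.
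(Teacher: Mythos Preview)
Your argument is correct. The paper does not actually give a proof of this lemma: it simply states that the proof is similar to \cite[Lemma 2.2]{CFL} and omits it. So there is nothing to compare against in the paper itself.

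What you have written is a clean, self-contained version of the standard argument one would expect behind that citation. The commutation identities $\sigma_a\rho^l=\rho^l\sigma_a$, $\mu_q\sigma_a=\sigma_a\mu_q$, and $\mu_q\rho^l=\rho^{ql}\mu_q$ are exactly the right ingredients, and your rewriting procedure gives the surjection onto normal forms and hence $|G|\le m'm(q-1)$. For the injectivity step, your two test evaluations do suffice: evaluating on $(1,0,\ldots,0)$ forces $a=a'$ and $q^{r_1}r_2\equiv q^{r_1'}r_2'\pmod m$; with that common exponent already matched, evaluating on $(1+x,0,\ldots,0)$ forces the \emph{second} exponent $q^{r_1}r_2+q^{r_1}$ to agree with $q^{r_1'}r_2'+q^{r_1'}$, hence $q^{r_1}\equiv q^{r_1'}\pmod m$ and so $r_1=r_1'$, after which $r_2=r_2'$ follows from $\gcd(q,m)=1$. (It is worth stating explicitly that the first test pins down the ``base'' exponent, so that in the second test there is no $\pm$ ambiguity in the difference; you gesture at this but a reader might otherwise worry.) Your alternative route via $\langle\rho^l\rangle\cap\langle\mu_q\rangle=\{1\}$ and $H\cap M=\{1\}$ is equally valid and perhaps slightly cleaner, since testing both maps on the constant $1$ and on $x$ immediately separates shifts from Frobenius from scalars.
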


Firstly, we consider the action of $\rho^l$ on $\mathcal{C}^*$. For each integer $i$ with $0\leq i\leq m-1$, it is easy to check that $|\mbox{Fix}(\rho^{il})|=|\mbox{Fix}(\rho^{\gcd(il,n)})|=|\mbox{Fix}(\rho^{\gcd(i,m)l})|$, where
$$\mbox{Fix}(\rho^{il})=\{\mathbf{c}\in \mathcal{C}^*|\rho^{il}(\mathbf{c})=\mathbf{c}\}.$$
For an integer $r$ with $r|m$, the number of integers $i$ satisfying $0\leq i\leq m-1$ and $\gcd(i,m)=r$ is equal to $\varphi(\frac{m}{r})$, where $\varphi$ is Euler's totient function. By Lemma \ref{l1}, one has
\begin{equation}\label{eq1}
|\langle \rho^l\rangle\setminus \mathcal{C}^*|=\frac{1}{m}\sum_{i=0}^{m-1}|\mbox{Fix}(\rho^{il})|=\frac{1}{m}\sum_{r|m}\varphi(\frac{m}{r})|\mbox{Fix}(\rho^{rl})|.
\end{equation}
\begin{lem}\label{lem1}
Let $\mathcal{C}$ be a $[n=lm,K]$ quasi-cyclic code over $\F_q$ which is a linear code over $R_m\varepsilon_t$. Suppose that the primitive idempotent $\varepsilon_t$ corresponds to the $q$-cyclotomic coset $\{i_t,i_tq,\ldots,i_tq^{k-1}\}$. Then the number of orbits of $\langle \rho^l\rangle$ on $\mathcal{C}^*=\mathcal{C}\setminus \{\mathbf{0}\}$ is equal to
$$\frac{\gcd(m,i_t)(q^K-1)}{m}.$$
In particular,
$$s(\mathcal{C})\leq \frac{\gcd(m,i_t)(q^K-1)}{m},$$
with equality if and only if for any codewords $\mathbf{c}_1,\mathbf{c}_2\in \mathcal{C}^*$ with the same weight, there exists an integer $i$ such that $\rho^{il}(\mathbf{c}_1)=\mathbf{c}_2$.
\end{lem}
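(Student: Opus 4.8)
The plan is to apply Burnside's lemma (Lemma \ref{l1}) to the cyclic group $\langle \rho^l\rangle$ of order $m$ acting on $\mathcal{C}^*$, feeding it into the fixed-point sum already recorded in equation (\ref{eq1}). The entire computation then reduces to evaluating $|\mbox{Fix}(\rho^{il})|$ for each $i$, and the crux is to transport the action of $\rho^l$ into the field $R_m\varepsilon_t$, where it becomes mere scalar multiplication.

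First I would use that $\mathcal{C}\subseteq (R_m\varepsilon_t)^l$, that $R_m\varepsilon_t$ is a minimal ideal and hence a field isomorphic to $\F_{q^{k}}$, and that under this isomorphism (evaluation at $\zeta^{i_t}$) the class of $x$ corresponds to $\zeta^{i_t}$, an element of multiplicative order $m/\gcd(m,i_t)$ since $\zeta$ is a primitive $m$-th root of unity. Because $\rho^l$ acts on $R_m^l$ as multiplication by $x$ in each coordinate, its restriction to $(R_m\varepsilon_t)^l$ is scalar multiplication by $\zeta^{i_t}$; consequently $\rho^{il}$ acts as multiplication by the single field element $\zeta^{i\,i_t}$.

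Next I would count the fixed points. For $\mathbf{c}\in\mathcal{C}$ we have $\rho^{il}(\mathbf{c})=\mathbf{c}$ iff $(\zeta^{i\,i_t}-1)\mathbf{c}=\mathbf{0}$; since $\F_{q^{k}}$ has no zero divisors this forces $\mathbf{c}=\mathbf{0}$ unless $\zeta^{i\,i_t}=1$, in which case every codeword is fixed. Hence $|\mbox{Fix}(\rho^{il})|=q^K-1$ when $\zeta^{i\,i_t}=1$ and $0$ otherwise. Now $\zeta^{i\,i_t}=1$ iff $m\mid i\,i_t$; writing $d=\gcd(m,i_t)$ and using $\gcd(m/d,\,i_t/d)=1$, this is equivalent to $(m/d)\mid i$, so over the range $0\le i\le m-1$ there are exactly $d=\gcd(m,i_t)$ admissible indices. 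Substituting into (\ref{eq1}) yields
$$|\langle \rho^l\rangle\backslash \mathcal{C}^*|=\frac{1}{m}\,\gcd(m,i_t)\,(q^K-1),$$
which is the asserted count. The bound $s(\mathcal{C})\le \gcd(m,i_t)(q^K-1)/m$ and its equality condition then follow immediately from Proposition \ref{p1} with $G=\langle\rho^l\rangle$, since equality there means any two equal-weight nonzero codewords lie in one $\langle\rho^l\rangle$-orbit, i.e. differ by some $\rho^{il}$.

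The only step requiring genuine care is the field-theoretic translation: verifying that $R_m\varepsilon_t$ is a field with $x\mapsto \zeta^{i_t}$ of order $m/\gcd(m,i_t)$, and that $\rho^l$ restricts there to scalar multiplication. Once this identification is secured the fixed-point dichotomy and the elementary divisor count are routine, so I expect this modest structural translation to be the main obstacle rather than any hard computation.
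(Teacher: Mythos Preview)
Your proposal is correct and follows essentially the same approach as the paper: both apply Burnside's lemma via equation~(\ref{eq1}), identify the action of $\rho^{il}$ on $(R_m\varepsilon_t)^l$ as scalar multiplication by $\zeta^{i\,i_t}$, and deduce that the fixed-point set is all of $\mathcal{C}^*$ or empty according as $m\mid i\,i_t$. The only cosmetic difference is that the paper works explicitly with the idempotents $e_{i_tq^j}$ (showing $x^r e_{i_tq^j}=\zeta^{i_tq^jr}e_{i_tq^j}$) and then evaluates the divisor-grouped sum $\sum_{r\mid m,\ \frac{m}{r}\mid i_t}\varphi(\tfrac{m}{r})=\gcd(m,i_t)$, whereas you invoke the field isomorphism $R_m\varepsilon_t\cong\F_{q^k}$ and count the admissible indices $i$ directly.
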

\begin{proof}
By Proposition \ref{p1}, it is enough to count the number of orbits of $\langle \rho^l\rangle$ on $\mathcal{C}^*$. By Eq. (\ref{eq1}), we aim to find the value of $|\mbox{Fix}(\rho^{rl})|$, for each divisor $r$ of $m$. To this end, let $r$ be a divisor of $m$ and take a typical nonzero element
$$\mathbf{c}=\big(\mathbf{c}_0(x),\mathbf{c}_1(x),\ldots,\mathbf{c}_{l-1}(x)\big)\in \mathcal{C}^*,$$
where $\mathbf{c}_u(x)\in R_m\varepsilon_t$ for $0\leq u\leq l-1$. By Eq. (\ref{Rt}), for each $0\leq u\leq l-1$,
$$\mathbf{c}_u(x)=\sum_{j=0}^{k-1}(c_{u0}+c_{u1}\zeta^{i_tq^j}+\cdots+c_{u,k-1}\zeta^{(k-1)i_tq^j})e_{i_tq^j}\in R_m\varepsilon_t.$$
Note that $e_{i_tq^j}=\frac{1}{m}\sum_{v=0}^{m-1}\zeta^{-i_tq^jv}x^v$, and thus
\begin{equation*}
\begin{split}
     x^re_{i_tq^j} & =\frac{1}{m}\sum_{v=0}^{m-1}\zeta^{-i_tq^jv}x^{v+r} \\
       & =\zeta^{i_tq^jr}\frac{1}{m}\sum_{v=0}^{m-1}\zeta^{-i_tq^j(v+r)}x^{v+r}\\
       & =\zeta^{i_tq^jr}e_{i_tq^j}.
\end{split}
\end{equation*}
Since $\rho^l(\mathbf{c})=\big(x\mathbf{c}_0(x),x\mathbf{c}_1(x),\ldots,x\mathbf{c}_{l-1}(x)\big)$, then we have
$$\rho^{rl}(\mathbf{c})=\big(x^r\mathbf{c}_0(x),x^r\mathbf{c}_1(x),\ldots,x^r\mathbf{c}_{l-1}(x)\big)$$
and
\begin{equation*}
  \begin{split}
    x^r\mathbf{c}_u(x) & =x^r\bigg(\sum_{j=0}^{k-1}(c_{u0}+c_{u1}\zeta^{i_tq^j}+\cdots+c_{u,k-1}\zeta^{(k-1)i_tq^j})e_{i_tq^j}\bigg) \\
      & =\sum_{j=0}^{k-1}(c_{u0}+c_{u1}\zeta^{i_tq^j}+\cdots+c_{u,k-1}\zeta^{(k-1)i_tq^j})x^re_{i_tq^j}\\
      &= \sum_{j=0}^{k-1}\zeta^{i_tq^jr}(c_{u0}+c_{u1}\zeta^{i_tq^j}+\cdots+c_{u,k-1}\zeta^{(k-1)i_tq^j})e_{i_tq^j},
  \end{split}
\end{equation*}
for $0\leq u\leq l-1$. It follows that $\rho^{rl}(\mathbf{c})=\mathbf{c}$ if and only if $x^r\mathbf{c}_u(x)=\mathbf{c}_u(x)$ for all $0\leq u\leq l-1$ if and only if $\zeta^{i_tq^jr}=1$ for all $0\leq j\leq k-1$. Since $\zeta$ is a primitive $m$-th root of unity and $\gcd(m,q)=1$, $\zeta^{i_tq^jr}=1$ precisely when $m$ is a divisor of $i_tr$ (equivalently, $\frac{m}{r}$ is a divisor of $i_t$). This leads to
$$|\mbox{Fix}(\rho^{rl})|=\left\{
  \begin{array}{ll}
    q^K-1, & \mbox{if} \ \ \frac{m}{r}|i_t\hbox{;} \\
    0, &  \mbox{if} \ \ \frac{m}{r}\nmid i_t\hbox{.}
  \end{array}
\right.$$
By Eq. (\ref{eq1}), the number of orbits of $\langle \rho^l\rangle$ on $\mathcal{C}^*$ is equal to
\begin{equation*}
 \begin{split}
   \frac{1}{m}\sum_{i=0}^{m-1}|\mbox{Fix}(\rho^{il})|&=\frac{1}{m}\sum_{r|m}\varphi(\frac{m}{r})|\mbox{Fix}(\rho^{rl})| \\
     & =\frac{q^K-1}{m}\sum_{r|m,\frac{m}{r}|i_t}\varphi(\frac{m}{r})\\
     & =\frac{\gcd(m,i_t)(q^K-1)}{m}.
 \end{split}
\end{equation*}
The proof is completed.
\end{proof}

Based on Lemma \ref{lem3}, we use the method provided in Lemma \ref{l2} to determine the number of orbits of the group $\langle \rho^l,M\rangle$ acting on the quasi-cyclic code.

\begin{lem}\label{lem4}
Let $\mathcal{C}$ be a $[n=lm,K]$ quasi-cyclic code over $\F_q$ which is a linear code over $R_m\varepsilon_t$. Suppose that the primitive idempotent $\varepsilon_t$ corresponds to the $q$-cyclotomic coset $\{i_t,i_tq,\ldots,i_tq^{k-1}\}$. Then the number of orbits of $\langle \rho^l,M\rangle$ on $\mathcal{C}^*=\mathcal{C}\setminus \{\mathbf{0}\}$ is equal to
$$\frac{\gcd\big(m,(q-1)i_t\big)(q^K-1)}{m(q-1)}.$$
In particular,
$$s(\mathcal{C})\leq \frac{\gcd\big(m,(q-1)i_t\big)(q^K-1)}{m(q-1)},$$
with equality if and only if for any codewords $\mathbf{c}_1,\mathbf{c}_2\in \mathcal{C}^*$ with the same weight, there exists an integer $i$ and an element $a\in \F_q^*$ such that $\rho^{il}(a\mathbf{c}_1)=\mathbf{c}_2$.
\end{lem}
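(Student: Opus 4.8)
The plan is to combine Lemma \ref{lem3} with Burnside's lemma (Lemma \ref{l1}), applied directly to the full group $G=\langle\rho^l,M\rangle$. By Lemma \ref{lem3}, $G=\langle\rho^l\rangle\times M$ has order $m(q-1)$ and every element can be written uniquely as $\rho^{il}\sigma_a$ with $0\le i\le m-1$ and $a\in\F_q^*$, where $\rho^l$ and $\sigma_a$ commute. Hence Lemma \ref{l1} gives
$$|G\backslash\mathcal{C}^*|=\frac{1}{m(q-1)}\sum_{i=0}^{m-1}\sum_{a\in\F_q^*}|\mathrm{Fix}(\rho^{il}\sigma_a)|,$$
so the whole computation reduces to evaluating $|\mathrm{Fix}(\rho^{il}\sigma_a)|$ for each pair $(i,a)$. (One could equivalently pass through the quotient $G/\langle\rho^l\rangle\cong M$ via Lemma \ref{l2}, but counting nonzero fixed codewords for the product group is the cleanest route.)

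To evaluate the fixed set I would reuse the eigenvalue computation from the proof of Lemma \ref{lem1}. Writing $\mathbf{c}=(\mathbf{c}_0(x),\ldots,\mathbf{c}_{l-1}(x))$ with $\mathbf{c}_u(x)\in R_m\varepsilon_t$, the fixed-point equation $\rho^{il}\sigma_a(\mathbf{c})=\mathbf{c}$ becomes $a\,x^i\mathbf{c}_u(x)=\mathbf{c}_u(x)$ for all $u$. Using $x^ie_{i_tq^j}=\zeta^{i_tq^ji}e_{i_tq^j}$ and the orthogonality of the idempotents $e_{i_tq^j}$, this holds for some nonzero $\mathbf{c}$ exactly when $a\,x^i\varepsilon_t=\varepsilon_t$ in the field $R_m\varepsilon_t$, that is, when $a\zeta^{i_tq^ji}=1$ for every $0\le j\le k-1$. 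Since $R_m\varepsilon_t$ is a field, whenever this condition fails the factor $a x^i-1$ is invertible and the only solution is $\mathbf{c}=\mathbf{0}$, so $|\mathrm{Fix}(\rho^{il}\sigma_a)|=0$; when it holds the map acts as the identity on all of $\mathcal{C}$, giving $|\mathrm{Fix}(\rho^{il}\sigma_a)|=q^K-1$, exactly as in Lemma \ref{lem1}.

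The key step is to determine, for each $i$, which scalars $a$ satisfy the condition. Taking $j=0$ forces $a=\zeta^{-i_ti}$, so $a$ is uniquely determined by $i$; the requirement $a\in\F_q^*$, equivalently $(\zeta^{-i_ti})^{q-1}=1$, is precisely the divisibility $m\mid(q-1)i_ti$. I would then observe that this single congruence already implies $a\zeta^{i_tq^ji}=\zeta^{i_ti(q^j-1)}=1$ for all $j$, because $(q-1)\mid(q^j-1)$; thus for each admissible $i$ there is exactly one contributing $a$, each contributing $q^K-1$. The number of $i$ with $0\le i\le m-1$ and $m\mid(q-1)i_ti$ is the number of multiples of $m/\gcd(m,(q-1)i_t)$ in that range, namely $\gcd(m,(q-1)i_t)$. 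Substituting into the Burnside sum yields
$$|G\backslash\mathcal{C}^*|=\frac{\gcd(m,(q-1)i_t)(q^K-1)}{m(q-1)},$$
and the bound together with its equality criterion follows from Proposition \ref{p1}. The only genuine subtlety is the interplay between the scalar and the Galois conjugates: recognizing that forcing $a\in\F_q^*$ at $j=0$ collapses all $k$ conditions to the single congruence $m\mid(q-1)i_ti$. Everything else is routine arithmetic.
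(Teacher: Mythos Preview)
Your proof is correct but follows a genuinely different route from the paper's. The paper proceeds via Lemma \ref{l2}: it first invokes the $\langle\rho^l\rangle$-orbit count from Lemma \ref{lem1}, then lets $M=\langle\sigma_\xi\rangle$ act on the set $\langle\rho^l\rangle\backslash\mathcal{C}^*$. For each divisor $r$ of $q-1$ it must decide when an orbit $\langle\rho^l\rangle(\mathbf{c})$ is fixed by $\sigma_\xi^r$, i.e.\ when there \emph{exists} a shift $z$ with $\zeta^{i_tz}=\xi^r$; this existential condition is unwound through a primitive element $\omega$ of $\F_{q^{m'}}^*$ into the divisibility $\frac{q-1}{r}\mid\frac{m}{\gcd(m,i_t)}$, and a sum of $\varphi$-values over admissible $r$ then produces the factor $\gcd\big(q-1,\frac{m}{\gcd(m,i_t)}\big)$.

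You instead apply Burnside directly to the full product group $\langle\rho^l\rangle\times M$ and count pairs $(i,a)$ for which $\rho^{il}\sigma_a$ acts trivially on $R_m\varepsilon_t$. This eliminates the existential quantifier entirely: the scalar $a$ is \emph{forced} to equal $\zeta^{-i_ti}$ by the $j=0$ condition, the requirement $a\in\F_q^*$ becomes the single congruence $m\mid(q-1)i_ti$, and the observation $(q-1)\mid(q^j-1)$ collapses the remaining $k-1$ eigenvalue conditions automatically. Your approach is shorter and more elementary for this lemma; the paper's quotient-action approach, on the other hand, is set up so that the $\langle\rho^l\rangle$-orbit count can be reused verbatim as input, which is how the paper structures the proof of Theorem \ref{thm2}.
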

\begin{proof}
It is readily seen that the multiplicative cyclic group $\F_q^*$ is isomorphic to $M$; consequently, $M$ is a cyclic group of order $q-1$. In particular, if $\xi$ is a primitive element of $\F_q$ (namely, the cyclic group $\F_q^*$ is generated by $\xi$), then $\sigma_\xi$ is a generator of $M$. Recall that $\langle \rho^l\rangle\backslash \mathcal{C}^*=\{\langle \rho^l\rangle(\mathbf{c})|\mathbf{c}\in \mathcal{C}^*\}$ denotes the set of orbits of $\langle \rho^l\rangle$ on $\mathcal{C}^*=\mathcal{C}\setminus \{\mathbf{0}\}$, where $\langle \rho^l\rangle(\mathbf{c})=\{\rho^{il}(\mathbf{c})|0\leq i\leq m-1\}$. Then $M$ acts on $\langle \rho^l\rangle\backslash \mathcal{C}^*$ in the following natural way:
$$M\times \langle \rho^l\rangle\backslash \mathcal{C}^*\rightarrow \langle \rho^l\rangle\backslash \mathcal{C}^*$$
$$(\sigma_a,\langle \rho^l\rangle(\mathbf{c}))\mapsto \langle \rho^l\rangle(a\mathbf{c}).$$
By Lemma \ref{l2}, the number of orbits of $M$ on $\mathcal{C}^*$ is equal to the number of orbits of $M$ on $\langle \rho^l\rangle\backslash \mathcal{C}^*$, where the latter is equal to
\begin{equation}\label{eq2}
  |(\langle \rho^l,M\rangle)\backslash \mathcal{C}^*|=\frac{1}{q-1}\sum_{r|(q-1)}\varphi\bigg(\frac{q-1}{r}\bigg)|\mbox{Fix}(\sigma^r_\xi)|
\end{equation}
with $\mbox{Fix}(\sigma^r_\xi)=\{\langle \rho^l\rangle(\mathbf{c})\in \langle \rho^l\rangle\backslash \mathcal{C}^*|\langle \rho^l\rangle(\mathbf{c})=\langle \rho^l\rangle(\xi^r\mathbf{c})\}$. Therefore, our ultimate goal is to calculate the value of $|\mbox{Fix}(\sigma^r_\xi)|$. To this end, as we did in the proof of Lemma \ref{lem1}, let $r$ be a divisor of $q-1$ and take a typical nonzero element
$$\mathbf{c}=\big(\mathbf{c}_0(x),\mathbf{c}_1(x),\ldots,\mathbf{c}_{l-1}(x)\big)\in \mathcal{C}^*,$$
where $\mathbf{c}_u(x)\in R_m\varepsilon_t$ for $0\leq u\leq l-1$. By Eq. (\ref{Rt}), for each $0\leq u\leq l-1$,
$$\mathbf{c}_u(x)=\sum_{j=0}^{k-1}(c_{u0}+c_{u1}\zeta^{i_tq^j}+\cdots+c_{u,k-1}\zeta^{(k-1)i_tq^j})e_{i_tq^j}\in R_m\varepsilon_t.$$
The condition $\langle \rho^l\rangle(\mathbf{c})=\langle \rho^l\rangle(\xi^r\mathbf{c})$ is equivalent to requiring that there exists an integer $z\geq 0$ such that $\rho^{zl}(\mathbf{c})=\xi^r\mathbf{c}$. Simple algebraic calculations show that
$$\rho^{zl}(\mathbf{c})=\big(x^z\mathbf{c}_0(x),x^z\mathbf{c}_1(x),\ldots,x^z\mathbf{c}_{l-1}(x)\big)$$
and
$$\xi^r\mathbf{c}=\big(\xi^r\mathbf{c}_0(x),\xi^r\mathbf{c}_1(x),\ldots,\xi^r\mathbf{c}_{l-1}(x)\big),$$
where for each $0\leq u\leq l-1$,
$$x^z\mathbf{c}_u(x)=\sum_{j=0}^{k-1}\zeta^{i_tq^jz}(c_{u0}+c_{u1}\zeta^{i_tq^j}+\cdots+c_{u,k-1}\zeta^{(k-1)i_tq^j})e_{i_tq^j}$$
and
$$\xi^r\mathbf{c}_u(x)=\sum_{j=0}^{k-1}\xi^r(c_{u0}+c_{u1}\zeta^{i_tq^j}+\cdots+c_{u,k-1}\zeta^{(k-1)i_tq^j})e_{i_tq^j}.$$
Therefore, $\rho^{zl}(\mathbf{c})=\xi^r\mathbf{c}$ if and only if $x^z\mathbf{c}_u(x)=\xi^r\mathbf{c}_u(x)$ for $0\leq u\leq l-1$ if and only if there exists an integer $z\geq 0$ such that $\zeta^{i_tq^jz}=\xi^r$ for $0\leq j\leq k-1$. Since $\xi \in \F_q$, there exists an integer $z\geq 0$ such that $\zeta^{i_tq^jz}=\xi^r$ for $0\leq j\leq k-1$ if and only if there exists an integer $z\geq 0$ such that $\zeta^{i_tz}=\xi^r$.

In the following we transform the equality $\zeta^{i_tz}=\xi^r$ into numerical conditions. Suppose that $\omega$ is a primitive element of $\F_q^{m'}$, where $m'$ is the least positive integer such that $m'$ is a divisor of $q^{m'}-1$. Denote by $\mbox{ord}(\alpha)$ the order of the element $\alpha\in \F_q^{m'}$. Note that $\zeta$ is a primitive $m$-th root of unity, $\xi$ is a primitive $(q-1)$-th root of unity and $r$ is a divisor $q-1$. Setting $\zeta=\omega^{\frac{q^{m'}-1}{m}}$ and $\xi=\omega^{\frac{q^{m'}-1}{q-1}}$, we have
\begin{equation*}
 \begin{split}
   \zeta^{i_tz}=\xi^r & \Leftrightarrow  \omega^{\frac{(q^{m'}-1)i_tz}{m}}=\omega^{\frac{(q^{m'}-1)r)}{q-1}}\\
     & \Leftrightarrow \langle \omega^{\frac{(q^{m'}-1)r)}{q-1}}\rangle \subseteq \langle \omega^{\frac{(q^{m'}-1)i_t)}{q-1}}\rangle \\
          & \Leftrightarrow \mbox{ord}(\omega^{\frac{(q^{m'}-1)r)}{q-1}}) | \mbox{ord}( \omega^{\frac{(q^{m'}-1)i_t)}{q-1}}) \\
   & \Leftrightarrow \gcd\big(q^{m'}-1,\frac{(q^{m'}-1)i_t}{m}\big)\bigg|\frac{(q^{m'}-1)r}{q-1}\\
   & \Leftrightarrow\frac{(q^{m'}-1)\gcd(m,i_t)}{m}\bigg|\frac{(q^{m'}-1)r}{q-1}\\
   &\Leftrightarrow \frac{q-1}{r}\bigg|\frac{m}{\gcd(m,i_t)},
 \end{split}
\end{equation*}
where $\langle \omega^{\frac{(q^{m'}-1)r)}{q-1}}\rangle$ and $\langle \omega^{\frac{(q^{m'}-1)i_t)}{q-1}}\rangle$ denote the cyclic subgroups of $\F_{q^{m'}}^*$ generated by $\omega^{\frac{(q^{m'}-1)r)}{q-1}}$ and $\omega^{\frac{(q^{m'}-1)i_t)}{q-1}}$, respectively. It follows that there exists an integer $z\geq 0$ such that $\zeta^{i_tz}=\xi^r$ if and only if $\frac{q-1}{r}$ is a divisor of $\frac{m}{\gcd(m,i_t)}$. By Lemma \ref{lem1}, $\langle \rho^l\rangle\backslash \mathcal{C}^*$ has size $\frac{\gcd(m,i_t)(q^K-1)}{m}$; then we have
$$|\mbox{Fix}(\sigma^{r}_\xi)|=\left\{
  \begin{array}{ll}
    \frac{\gcd(m,i_t)(q^K-1)}{m}, & \mbox{if} \ \ \frac{q-1}{r}|\frac{m}{\gcd(m,i_t)}\hbox{;} \\
    0, &  \mbox{if} \ \ \frac{q-1}{r}\nmid \frac{m}{\gcd(m,i_t)}\hbox{.}
  \end{array}
\right.$$
Returning to Eq. (\ref{eq2}), the number of orbits of $\langle \rho^l,M\rangle$ on $\mathcal{C}^*$ is equal to
\begin{equation*}
  \begin{split}
     |(\langle \rho^l,M\rangle)\backslash \mathcal{C}^*| & =\frac{1}{q-1}\sum_{r|(q-1)}\varphi\bigg(\frac{q-1}{r}\bigg)|\mbox{Fix}(\sigma^r_\xi)| \\
       & =\frac{1}{q-1}\sum_{r|(q-1)}\varphi(r)|\mbox{Fix}(\sigma^{\frac{q-1}{r}}_\xi)|\\
       &=\frac{1}{q-1}\sum_{r|(q-1),r|\frac{m}{\gcd(m,i_t)}}\varphi(r)\frac{\gcd(m,i_t)(q^K-1)}{m}\\
       &=\frac{\gcd(m,i_t)(q^K-1)}{m(q-1)}\sum_{r|(q-1),r|\frac{m}{\gcd(m,i_t)}}\varphi(r)\\
       &=\frac{\gcd\big(q-1,\frac{m}{\gcd(m,i_t)}\big)\gcd(m,i_t)(q^K-1)}{m(q-1)}\\
       &=\frac{\gcd\big(m,(q-1)i_t\big)(q^K-1)}{m(q-1)}.
   \end{split}
\end{equation*}
The proof is completed.
\end{proof}

Next, we consider the action of $\langle \mu_q,\rho^l,M\rangle$ on $\mathcal{C}^*$, where $\mathcal{C}$ is a one-generator quasi-cyclic code.
\begin{lem}\label{lem6}
Suppose that $\mathbf{f}(x)\in R_m$. Let $\mathcal{C}$ be a one-generator quasi-cyclic code over $\F_q$ which is a $[l,1]$-linear code over $R_m\varepsilon_t$ with generator matrix $[\mathbf{a}_{0}(x),\mathbf{a}_{1}(x),\ldots,\mathbf{a}_{l-1}(x)]$ over $R_m\varepsilon_{t_j}$, where $\mathbf{a}_{v}\in \{\mathbf{0},\mathbf{f}(x)\}$ for $0\leq v\leq l-1$. Suppose that the primitive idempotent $\varepsilon_t$ corresponds to the $q$-cyclotomic coset $\{i_t,i_tq,\ldots,i_tq^{k-1}\}$. Then the number of orbits of $\langle \mu_q,\rho^l,M\rangle$ on $\mathcal{C}^*=\mathcal{C}\setminus \{\mathbf{0}\}$ is equal to
$$\frac{1}{k}\sum_{r|k}\varphi\big(\frac{k}{r}\big)\gcd\bigg(q^r-1,\frac{q^k-1}{q-1},\frac{i_t(q^k-1)}{m}\bigg).$$
In particular,
$$s(\mathcal{C})\leq \frac{1}{k}\sum_{r|k}\varphi\big(\frac{k}{r}\big)\gcd\bigg(q^r-1,\frac{q^k-1}{q-1},\frac{i_t(q^k-1)}{m}\bigg),$$
with equality if and only if for any codewords $\mathbf{c}_1,\mathbf{c}_2\in \mathcal{C}^*$ with the same weight, there exist integers $i$, $j$ and $a\in \F_q^*$ such that $\mu_q^i\rho^{jl}\sigma_a(\mathbf{c}_1)=\mathbf{c}_2$.
\end{lem}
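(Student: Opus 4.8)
The plan is to pass to a field model in which the three generators of $G=\langle\mu_q,\rho^l,M\rangle$ act by very simple maps, and then count orbits with Burnside's formula (Lemma \ref{l1}). Since $\varepsilon_t$ is a primitive idempotent whose $q$-cyclotomic coset has size $k$, the ring $R_m\varepsilon_t$ is a field isomorphic to $\F_{q^k}$. Write $S=\{v:\mathbf{a}_v(x)=\mathbf{f}(x)\}$ for the set of coordinates carrying the generator. Because every nonzero codeword $\mathbf{c}$ has $\mathbf{c}_v$ equal to one common value $h\in R_m\varepsilon_t\setminus\{0\}$ for $v\in S$ and $\mathbf{c}_v=0$ otherwise, the assignment $\mathbf{c}\mapsto h$ is a bijection $\mathcal{C}^*\to\F_{q^k}^*$. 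First I would check that this bijection is $G$-equivariant and identify the induced actions: from the computation $x^r e_{i_tq^j}=\zeta^{i_tq^jr}e_{i_tq^j}$ already used in Lemma \ref{lem1}, multiplication by $x$ acts on $R_m\varepsilon_t$ as multiplication by $\alpha:=\zeta^{i_t}$, so $\rho^l$ becomes $h\mapsto\alpha h$ with $\alpha$ of order $m_0:=m/\gcd(m,i_t)$; $\sigma_a$ becomes $h\mapsto ah$ for $a\in\F_q^*$; and, since $\mu_q$ fixes $\F_q$ and sends $\alpha$ to $\alpha^q$, it restricts on the support value to the Frobenius $h\mapsto h^q$ of order $k$. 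It is essential here that $\mu_q$ is a clean Frobenius in the coordinate $h$, not in the message coordinate $g$ with $h=g\mathbf{f}$.

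Next I would Burnside over the effective group. The image of $\langle\rho^l,M\rangle$ is multiplication by the cyclic subgroup $H_0:=\langle\alpha,\xi\rangle\le\F_{q^k}^*$ of order $\mathrm{lcm}(m_0,q-1)$, and the Frobenius normalizes $H_0$; hence $G$ acts through $\bar G=H_0\rtimes\langle\mathrm{Frob}\rangle$, and $|G\backslash\mathcal{C}^*|=|\bar G\backslash\F_{q^k}^*|$. Writing a general element as $(\lambda,\mathrm{Frob}^j)$ with $\lambda\in H_0$ and $0\le j\le k-1$, acting by $h\mapsto\lambda h^{q^j}$, its fixed points solve $h^{q^j-1}=\lambda^{-1}$. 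Setting $Q=q^k-1$ and $r=\gcd(j,k)$, this equation has $q^r-1$ solutions when $\lambda$ lies in the unique subgroup $P_r\le\F_{q^k}^*$ of order $Q/(q^r-1)$, and none otherwise; summing over $\lambda\in H_0$ gives $|H_0\cap P_r|(q^r-1)$. Grouping the $k$ values of $j$ by $r=\gcd(j,k)$ (there are $\varphi(k/r)$ of them) and dividing by $|\bar G|=|H_0|\,k$ yields
$$|G\backslash\mathcal{C}^*|=\frac1k\sum_{r\mid k}\varphi\Big(\frac kr\Big)\frac{|H_0\cap P_r|\,(q^r-1)}{|H_0|}.$$

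Finally I would simplify the summand to the stated form. Using $|H_0\cap P_r|=\gcd(|H_0|,Q/(q^r-1))$ and the identity $\gcd(u,v)=uv/\mathrm{lcm}(u,v)$, the summand collapses to $Q/\mathrm{lcm}(|H_0|,Q/(q^r-1))$; expanding $|H_0|=\mathrm{lcm}(m_0,q-1)$ and applying the divisor-lattice duality $Q/\mathrm{lcm}(a_1,a_2,a_3)=\gcd(Q/a_1,Q/a_2,Q/a_3)$ for $a_i\mid Q$ gives $\gcd(q^r-1,\frac{Q}{q-1},\frac{Q}{m_0})$. It then remains to replace $\frac{Q}{m_0}=\frac{\gcd(m,i_t)Q}{m}$ by $\frac{i_tQ}{m}$ inside the gcd; since $\frac{i_t}{\gcd(m,i_t)}$ is coprime to $m_0$ and $\gcd(q^r-1,\frac{Q}{q-1})$ already divides $Q$, a prime-by-prime comparison shows the two gcds coincide. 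This produces exactly $\frac1k\sum_{r\mid k}\varphi(k/r)\gcd(q^r-1,\frac{q^k-1}{q-1},\frac{i_t(q^k-1)}{m})$, and the bound $s(\mathcal{C})\le\cdots$ with its equality criterion follows from Proposition \ref{p1} applied to $G=\langle\mu_q,\rho^l,M\rangle$.

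I expect the two delicate points to be, first, the bookkeeping of the group action—in particular recognizing that $\mu_q$ becomes the plain Frobenius only after passing to the support-value coordinate $h$, and that $\langle\rho^l,M\rangle$ acts through the possibly non-faithful multiplication subgroup $H_0$—and second, the closing number-theoretic reconciliation, where the naturally occurring factor $\frac{Q}{m_0}=\frac{\gcd(m,i_t)(q^k-1)}{m}$ must be shown to have the same gcd as the stated $\frac{i_t(q^k-1)}{m}$.
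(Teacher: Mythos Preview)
Your approach is correct and lands on the same formula, but the execution differs from the paper's in a useful way. The paper applies Burnside directly on the full group $\langle\mu_q,\rho^l,M\rangle$ of order $m'm(q-1)$, reduces the fixed-point condition $\mu_q^{r_1}\rho^{r_2l}\sigma_a(\mathbf{c})=\mathbf{c}$ to the field equation $a\alpha^{q^{r_1}-1}=\zeta^{-i_tq^{r_1}r_2}$ for $\alpha\in\F_{q^k}^*$, and then quotes three counting facts from the external reference \cite{CFL} to evaluate the resulting triple sum. You instead first quotient by the kernel of the action, passing to the effective group $\bar G=H_0\rtimes\langle\mathrm{Frob}\rangle$ of order $k\,|H_0|$; this removes the redundant factor $m'/k$ in the multiplier direction and the non-faithfulness of the multiplication action before Burnside is applied. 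Your self-contained simplification via the divisor-lattice identity $Q/\mathrm{lcm}(a_1,a_2,a_3)=\gcd(Q/a_1,Q/a_2,Q/a_3)$ replaces the paper's appeal to \cite{CFL}, which makes your argument more transparent, while the paper's formulation dovetails more directly with the multi-component machinery used later in Theorem~\ref{thm3}.

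Your two flagged delicate points are both handled correctly. The observation that $\mu_q$ becomes the plain Frobenius on the support value $h=g\mathbf{f}$ (rather than on the message $g$) is exactly right: since $\mu_q$ is a ring automorphism of $R_m\varepsilon_t$ and the identification with $\F_{q^k}$ is evaluation at $\zeta^{i_t}$, one has $\mu_q(h)=h^q$ directly; this also confirms $\mu_q(\mathcal{C})=\mathcal{C}$ for this particular generator shape. And your closing number-theoretic reconciliation---replacing $\frac{\gcd(m,i_t)(q^k-1)}{m}$ by $\frac{i_t(q^k-1)}{m}$ inside the gcd---is valid by your prime-by-prime argument: whenever $p\mid\frac{i_t}{\gcd(m,i_t)}$ one has $p\nmid m_0$, hence $v_p(Q/m_0)=v_p(Q)\ge v_p(X)$ for $X=\gcd(q^r-1,\frac{Q}{q-1})$, so the extra factor cannot change the gcd.
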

\begin{proof}
By Proposition \ref{p1}, it is enough to count the number of orbits of $\langle \mu_q,\rho^l,M\rangle$ on $\mathcal{C}^*$. It follows from Eq. (\ref{eq1}) and Lemma \ref{lem5} that
\begin{equation}\label{eq4}
|\langle \mu_q,\rho^l,M\rangle\backslash \mathcal{C}^*|=\frac{1}{m'm(q-1)}\sum_{r_1=0}^{m'-1}\sum_{r_2=0}^{m-1}\sum_{a\in \F_q^*}\big|\{\mathbf{c}\in \mathcal{C}^*|\mu_q^{r_1}\rho^{r_2l}\sigma_a(\mathbf{c})=\mathbf{c}\}\big|.
\end{equation}
Take a typical nonzero element
$$\mathbf{c}=\big(\mathbf{c}_0(x),\mathbf{c}_1(x),\ldots,\mathbf{c}_{l-1}(x)\big)\in \mathcal{C}^*,$$
where $\mathbf{c}_u(x)\in R_m\varepsilon_t$ for $0\leq u\leq l-1$. Since $\mathcal{C}$ is a $[l,1]$-linear code over $R_m\varepsilon_t$, each $\mathbf{c}_u(x)\in \{\mathbf{0},\mathbf{F}(x)\}$ where $\mathbf{F}(x) \in R_m\varepsilon_t$. Therefore, $\mu_q^{r_1}\rho^{r_2l}\sigma_a(\mathbf{c})=\mathbf{c}$ if and only if $\mu_q^{r_1}\rho^{r_2}\sigma_a(\mathbf{F}(x))=\mathbf{F}(x)$. By Eq. (\ref{Rt}),
$$\mathbf{F}(x)=\sum_{j=0}^{k-1}(f_{0}+f_{1}\zeta^{i_tq^j}+\cdots+f_{k-1}\zeta^{(k-1)i_tq^j})e_{i_tq^j}\in R_m\varepsilon_t.$$
Note that $e_{i_tq^j}=\frac{1}{m}\sum_{v=0}^{m-1}\zeta^{-i_tq^jv}x^v$ and $\rho^{r_2}\sigma_a(e_{i_tq^j})=a\zeta^{i_tq^{j_{r_2}}}e_{i_tq^j}$ thus
$$\mu_q^{r_1}\rho^{r_2}\sigma_a(e_{i_tq^j})=a\zeta^{i_tq^{j_{r_2}}}e_{i_tq^{-r_1+j}},$$
where the subscript ${i_tq^{-r_1+j}}$ is calculated modulo $m$. Then we have
\begin{equation*}
  \begin{split}
    \mu_q^{r_1}\rho^{r_2}\sigma_a(\mathbf{F}(x)) & = \mu_q^{r_1}\rho^{r_2}\sigma_a\bigg(\sum_{j=0}^{k-1}(f_{0}+f_{1}\zeta^{i_tq^j}+\cdots+f_{k-1}\zeta^{(k-1)i_tq^j})e_{i_tq^j}\bigg)\\
      &= \sum_{j=0}^{k-1}(f_{0}+f_{1}\zeta^{i_tq^j}+\cdots+f_{k-1}\zeta^{(k-1)i_tq^j})\mu_q^{r_1}\rho^{r_2}\sigma_a(e_{i_tq^j})\\
      &=\sum_{j=0}^{k-1}a\zeta^{i_tq^{j_{r_2}}}(f_{0}+f_{1}\zeta^{i_tq^j}+\cdots+f_{k-1}\zeta^{(k-1)i_tq^j})e_{i_tq^{-r_1+j}}\\
      &=\sum_{j=0}^{k-1}a\zeta^{i_tq^{-r_1+j}q^{{r_1}_{r_2}}}(f_{0}+f_{1}\zeta^{i_tq^{-r_1+j}}+\cdots+f_{k-1}\zeta^{(k-1)i_tq^{-r_1+j}})^{q^{r_1}}e_{i_tq^{-r_1+j}}\\
      &=\sum_{j=0}^{k-1}a\zeta^{i_tq^{r_1+j_{r_2}}}(f_{0}+f_{1}\zeta^{i_tq^j}+\cdots+f_{k-1}\zeta^{(k-1)i_tq^j})^{q^{r_1}}e_{i_tq^j}.
  \end{split}
\end{equation*}
Hence $\mu_q^{r_1}\rho^{r_2}\sigma_a(\mathbf{F}(x))=\mathbf{F}(x)$ if and only if
$$a(f_{0}+f_{1}\zeta^{i_tq^j}+\cdots+f_{k-1}\zeta^{(k-1)i_tq^j})^{q^{r_1-1}}=\zeta^{-i_tq^{r_1+j_{r_2}}}\ \ \ \mbox{for}\ \ \ 0\leq j\leq k-1,$$
which is equivalent to
$$a(f_{0}+f_{1}\zeta^{i_t}+\cdots+f_{k-1}\zeta^{(k-1)i_t})^{q^{r_1-1}}=\zeta^{-i_tq^{{r_1}_{r_2}}}.$$
Since the minimal polynomial of $\zeta^{i_t}$ over $\F_q$ is of degree $k$, the set
$$\{f_{0}+f_{1}\zeta^{i_t}+\cdots+f_{k-1}\zeta^{(k-1)i_t}|f_{v}\in \F_q, 0\leq v\leq k-1\}$$
forms a subfield of $\F_{q^{m'}}$ of size $q^k$. Therefore, the number of $\mathbf{c}\in \mathcal{C}^*$ satisfying $\mu_q^{r_1}\rho^{r_2l}\sigma_a(\mathbf{c})=\mathbf{c}$ is equal to the number of $\mathbf{F}(x)\in R_m\varepsilon_t$ satisfying $\mu_q^{r_1}\rho^{r_2}\sigma_a(\mathbf{F}(x))=\mathbf{F}(x)$, which is equal to the number of $\alpha\in \F^*_{q^k}$ such that $a\alpha^{q^{r_1-1}}=\zeta^{-i_tq^{{r_1}_{r_2}}}$. By the proof of \cite[Theorem 3.1]{CFL}, we have the following three facts:
\begin{enumerate}
  \item The number of $\alpha\in \F_{q^k}^*$ such that $a\alpha^{q^{r_1-1}}=\zeta^{-i_tq^{{r_1}_{r_2}}}$ is equal to $0$ or $q^{\gcd(k,r_1)}-1$.
  \item Let $\F_q$ and $\F^*_{q^k}$ be generated by $\xi$ and $\theta$, respectively. For $0\leq r_1\leq m'-1$, denote $S(r_1)=\{0\leq r_2\leq m-1|\zeta^{-i_tq^{{r_1}_{r_2}}}\in \langle\xi\rangle\langle \theta^{q^{r_1}-1}\rangle\}$, and then $|S(r_1)|=\gcd(m,i_t|\langle\xi\rangle\langle \theta^{q^{r_1}-1}\rangle|)$.
  \item Suppose $r_2\in S(r_1)$ and denote $R(r_1,r_2)=\{0\leq r\leq q-2|\zeta^{-i_tq^{{r_1}_{r_2}}}\in\xi^r\langle \theta^{q^{r_1}-1}\rangle\}$. Then, $|R(r_1,r_2)|=|\langle\xi\rangle \cap\langle \theta^{q^{r_1}-1}\rangle|$.
\end{enumerate}
According to these three facts and the similar calculation as in \cite[Theorem 3.1]{CFL}, we have that
\begin{equation*}
  \begin{split}
    |\langle \mu_q,\rho^l,M\rangle\backslash \mathcal{C}^*| & =\frac{1}{m'm(q-1)}\sum_{r_1=0}^{m'-1}\sum_{r_2\in S(r_1)}\sum_{r_3\in R(r_1,r_2)}(q^{\gcd(k,r_1)}-1)\\
      &=\frac{1}{m'm(q-1)}\sum_{r_1=0}^{m'-1}|S(r_1)||R(r_1,r_2)|(q^{\gcd(k,r_1)}-1)\\
      &=\frac{1}{m'm(q-1)}\sum_{r_1=0}^{m'-1}\gcd(m|\langle\xi\rangle \cap\langle \theta^{q^{r_1}-1}\rangle|,i_t|\langle\xi\rangle||\langle \theta^{q^{r_1}-1}\rangle|)(q^{\gcd(k,r_1)}-1)\\
      &=\frac{1}{k}\sum_{r|k}\varphi\big(\frac{k}{r}\big)\gcd\bigg(q^r-1,\frac{q^k-1}{q-1},\frac{i_t(q^k-1)}{m}\bigg).
  \end{split}
\end{equation*}

The proof is completed.
\end{proof}

\subsection{Proof of Theorem \ref{thm1} and Corollary \ref{cor1}}
\begin{proof}
It is easy to check that $\mathcal{C}^*$ is equal to
$$\bigsqcup_{\{j_1,j_2,\ldots,j_u\}\subseteq \{1,2,\ldots,U\},1\leq<j_1<j_2<\cdots<j_u\leq U}C_{t_{j_1}}\setminus \{\mathbf{0}\}\oplus C_{t_{j_2}}\setminus \{\mathbf{0}\}\oplus\cdots\oplus C_{t_{j_u}}\setminus \{\mathbf{0}\},$$
which is a disjoint union. For all $j_1,j_2,\ldots,j_u$, $C_{t_{j_v}}$ is a linear code over $R_m\varepsilon_{t_{j_v}}$ of length $l$ with $1\leq v\leq u$. Let $s_{j_1j_2\cdots j_u}$ be the number of orbits of $\langle \rho^l\rangle$ acting on
$$C_{t_{j_1}}\setminus \{\mathbf{0}\}\oplus C_{t_{j_2}}\setminus \{\mathbf{0}\}\oplus\cdots\oplus C_{t_{j_u}}\setminus \{\mathbf{0}\},$$
which is denoted by $\mathcal{C}_{j_1j_2\cdots j_u}^\sharp$. Thus the group $\langle \rho^l\rangle$ can act on the set $\mathcal{C}_{j_1j_2\cdots j_u}^\sharp$ in the same way as the group action on $\mathcal{C}$. Then, we have
$$\mathcal{C}^*=\bigsqcup_{\{j_1,j_2,\ldots,j_u\}\subseteq \{1,2,\ldots,U\},1\leq<j_1<j_2<\cdots<j_u\leq U}\mathcal{C}_{j_1j_2\cdots j_u}^\sharp$$
and
$$|\langle \rho^l\rangle \backslash \mathcal{C}^*|=\sum_{\{j_1,j_2,\ldots,j_u\}\subseteq \{1,2,\ldots,U\},1\leq<j_1<j_2<\cdots<j_u\leq U}s_{j_1j_2\cdots j_u}.$$
It is enough to compute the number of orbits of the group $\langle \rho^l\rangle$ acting on $\mathcal{C}_{j_1j_2\cdots j_u}^\sharp$.

According to Eq. (\ref{eq1}), we only need to compute the value of $|\mbox{Fix}(\rho^{rl})|$ for each divisor $r$ of $m$. Let $\mathbf{c}=\mathbf{c}_{t_{j_1}}+\mathbf{c}_{t_{j_1}}+\cdots+\mathbf{c}_{t_{j_u}}\in \mathcal{C}_{j_1j_2\cdots j_u}^\sharp$, where $\mathbf{c}_{t_{j_v}}\in C_{t_{j_v}}\setminus \{\mathbf{0}\}\subseteq (R_m\varepsilon_{t_{j_v}})^l$ for $v=1,2,\ldots,u$. Suppose that for each $v=1,2,\ldots,u$,
$$\mathbf{c}_{t_{j_v}}=\big(\mathbf{c}_{{t_{j_v}},0}(x),\mathbf{c}_{{t_{j_v}},1}(x),\ldots,\mathbf{c}_{t_{j_v},l-1}(x)\big),$$
where $\mathbf{c}_{{t_{j_v}},v'}(x)=\sum_{j=0}^{k_{t_{j_v}}-1}\sum_{u'=0}^{k_{t_{j_v}}-1}c_{v',u',t_{j_v}}\zeta^{u'i_{t_{j_v}}q^j}e_{i_{t_{j_v}}q^j}$ for $0\leq v'\leq l-1$. Then we have
\begin{equation*}
  \begin{split}
    \rho^{rl}(\mathbf{c}) & = \rho^{rl}(\mathbf{c}_{t_{j_1}})+\rho^{rl}(\mathbf{c}_{t_{j_1}})+\cdots+\rho^{rl}(\mathbf{c}_{t_{j_u}})\\
      &= \bigg(x^r\sum_{v=1}^u\mathbf{c}_{{t_{j_v}},0}(x),x^r\sum_{v=1}^u\mathbf{c}_{{t_{j_v}},1}(x),\ldots,x^r\sum_{v=1}^u\mathbf{c}_{{t_{j_v}},l-1}(x)\bigg),
  \end{split}
\end{equation*}
where for each $0\leq v'\leq l-1$,
\begin{equation*}
\begin{split}
  x^r\sum_{v=1}^u\mathbf{c}_{{t_{j_v}},v'}(x) =& \sum_{j=0}^{k_{t_{j_1}}-1}\zeta^{i_{t_{j_1}}q^jr}\sum_{u'=0}^{k_{t_{j_1}}-1}c_{v',u',t_{j_1}}\zeta^{u'i_{t_{j_1}}q^j}e_{i_{t_{j_1}}q^j}+\cdots\\
    & +\sum_{j=0}^{k_{t_{j_u}}-1}\zeta^{i_{t_{j_u}}q^jr}\sum_{u'=0}^{k_{t_{j_u}}-1}c_{v',u',t_{j_u}}\zeta^{u'i_{t_{j_u}}q^j}e_{i_{t_{j_u}}q^j}.
\end{split}
\end{equation*}
Then we can conclude that $\rho^{rl}(\mathbf{c})=\mathbf{c}$ if and only if $\rho^{rl}(\mathbf{c}_{t_{j_1}})+\rho^{rl}(\mathbf{c}_{t_{j_1}})+\cdots+\rho^{rl}(\mathbf{c}_{t_{j_u}})=\mathbf{c}_{t_{j_1}}+\mathbf{c}_{t_{j_1}}+\cdots+\mathbf{c}_{t_{j_u}}$ if and only if
$x^r\sum_{v=1}^u\mathbf{c}_{{t_{j_v}},v'}(x)=\sum_{v=1}^u\mathbf{c}_{{t_{j_v}},v'}(x)$ for $1\leq v'\leq l-1$ if and only if $\zeta^{i_{t_{j_v}}q^jr}=1$ for all $v$ and $j$ if and only if $m|(i_{t_{j_v}}q^jr)$ for all $v$ and $j$ if and only if $m|(i_{t_{j_v}}r)$ for $1\leq v\leq u$ if and only if $\frac{m}{r}|i_{t_{j_v}}$ for $1\leq v\leq u$.
It follows that
$$|\mbox{Fix}(\rho^{rl})|=\left\{
  \begin{array}{ll}
    \prod_{v=1}^u(q^{K_{t_{j_v}}}-1), & \mbox{if} \ \ \frac{m}{r}|i_{t_v}\ \ \mbox{for}\ \mbox{all}\ v=1,2,\ldots,u\hbox{;} \\
    0, &  \mbox{otherwise}\hbox{.}
  \end{array}
\right.$$
Using Eq. (\ref{eq1}), the number of orbits of $\langle \rho^l\rangle$ on $\mathcal{C}_{j_1j_2\cdots j_u}^\sharp$ is
\begin{equation*}
  \begin{split}
  s_{j_1j_2\cdots j_u}& =|\langle \rho^l\rangle\backslash \mathcal{C}_{j_1j_2\cdots j_u}^\sharp| \\
    & = \frac{1}{m}\sum_{i=0}^{m-1}|\mbox{Fix}(\rho^{il})|=\frac{1}{m}\sum_{r|m}\varphi(\frac{m}{r})|\mbox{Fix}(\rho^{rl})|\\
      & =\frac{1}{m}\sum_{r|m,\frac{m}{r}|i_{t_v},1\leq v\leq u}\varphi(\frac{m}{r})\prod_{v=1}^u(q^{K_{t_{j_v}}}-1)\\
      &=\frac{\gcd(m,i_{t_{j_1}},i_{t_{j_2}},\ldots,i_{t_{j_u}})\prod_{v=1}^u(q^{K_{t_{j_v}}}-1)}{m}.
  \end{split}
\end{equation*}
Therefore, the number of orbits of $\langle \rho^l\rangle$ on $\mathcal{C}^*=\mathcal{C}\setminus \{\mathbf{0}\}$ is equal to
$$\sum_{\{j_1,j_2,\ldots,j_u\}\subseteq \{1,2,\ldots,U\},1\leq<j_1<j_2<\cdots<j_u\leq U}\frac{\gcd(m,i_{t_{j_1}},i_{t_{j_2}},\ldots,i_{t_{j_u}})\prod_{v=1}^u(q^{K_{t_{j_v}}}-1)}{m}.$$

Let $U=2$, then we have $$|\langle \rho^l\rangle\backslash \mathcal{C}^*|=|\langle \rho^l\rangle\backslash \mathcal{C}_{t_1,t_2}^\sharp|+s_{t_1}+s_{t_2}.$$
By Lemma \ref{lem1}, we immediately get
$$|\langle \rho^l\rangle\backslash \mathcal{C}_{t_1t_2}^\sharp|=\frac{\gcd(m,i_{t_1},i_{t_2})(q^{K_{t_1}}-1)(q^{K_{t_2}}-1)}{m},$$
$$s_{t_1}=\frac{\gcd(m,i_{t_1})(q^{K_{t_1}}-1)}{m},s_{t_2}=\frac{\gcd(m,i_{t_2})(q^{K_{t_2}}-1)}{m},$$
which gives the desired result.
\end{proof}
\subsection{Proof of Theorem \ref{thm2} and Corollary \ref{cor2}}
\begin{proof}
It is easy to check that $\mathcal{C}^*$ is equal to
$$\bigsqcup_{\{j_1,j_2,\ldots,j_u\}\subseteq \{1,2,\ldots,U\},1\leq<j_1<j_2<\cdots<j_u\leq U}C_{t_{j_1}}\setminus \{\mathbf{0}\}\oplus C_{t_{j_2}}\setminus \{\mathbf{0}\}\oplus\cdots\oplus C_{t_{j_u}}\setminus \{\mathbf{0}\},$$
which is a disjoint union. For all $j_1,j_2,\ldots,j_u$, $C_{t_{j_v}}$ is a linear code over $R_m\varepsilon_{t_{j_v}}$ of length $l$ with $1\leq v\leq u$. Let $s_{j_1j_2\cdots j_u}$ be the number of orbits of $\langle \rho^l, M\rangle$ acting on
$$C_{t_{j_1}}\setminus \{\mathbf{0}\}\oplus C_{t_{j_2}}\setminus \{\mathbf{0}\}\oplus\cdots\oplus C_{t_{j_u}}\setminus \{\mathbf{0}\},$$
which is denoted by $\mathcal{C}_{j_1j_2\cdots j_u}^\sharp$. Thus the group $\langle \rho^l, M\rangle$ can act on the set $\mathcal{C}_{j_1j_2\cdots j_u}^\sharp$ in the same way as the group action on $\mathcal{C}$. Then, we have
$$\mathcal{C}^*=\bigsqcup_{\{j_1,j_2,\ldots,j_u\}\subseteq \{1,2,\ldots,U\},1\leq<j_1<j_2<\cdots<j_u\leq U}\mathcal{C}_{j_1j_2\cdots j_u}^\sharp$$
and
$$|(\langle \rho^l, M\rangle) \backslash \mathcal{C}^*|=\sum_{\{j_1,j_2,\ldots,j_u\}\subseteq \{1,2,\ldots,U\},1\leq<j_1<j_2<\cdots<j_u\leq U}s_{j_1j_2\cdots j_u}.$$
It is enough to compute the number of orbits of the group $\langle \rho^l,M\rangle$ acting on $\mathcal{C}_{j_1j_2\cdots j_u}^\sharp$.

According to Eq. (\ref{eq2}), the number of orbits of $\langle \rho^l,M\rangle$ on $\mathcal{C}_{j_1j_2\cdots j_u}^\sharp$ is equal to
$$  |(\langle \rho^l,M\rangle)\backslash \mathcal{C}_{j_1j_2\cdots j_u}^\sharp|=\frac{1}{q-1}\sum_{r|(q-1)}\varphi(\frac{q-1}{r})|\mbox{Fix}(\sigma^r_\xi)|$$
with $\mbox{Fix}(\sigma^r_\xi)=\{\langle \rho^l\rangle(\mathbf{c})\in \langle \rho^l\rangle\backslash \mathcal{C}_{j_1j_2\cdots j_u}^\sharp|\langle \rho^l\rangle(\mathbf{c})=\langle \rho^l\rangle(\xi^r\mathbf{c})\}$. Therefore, it is enough to calculate the value of $|\mbox{Fix}(\sigma^r_\xi)|$. Note that $\langle \rho^l\rangle(\mathbf{c})=\langle \rho^l\rangle(\xi^r\mathbf{c})$ is equivalent to requiring that there exists an integer $z$ such that $\rho^{zl}(\mathbf{c})=\xi^r\mathbf{c}$.

Let $\mathbf{c}=\mathbf{c}_{t_{j_1}}+\mathbf{c}_{t_{j_2}}+\cdots+\mathbf{c}_{t_{j_u}}\in \mathcal{C}_{j_1j_2\cdots j_u}^\sharp$, where $\mathbf{c}_{t_{j_v}}\in C_{t_{j_v}}\setminus \{\mathbf{0}\}\subseteq (R_m\varepsilon_{t_{j_v}})^l$ for $v=1,2,\ldots,u$. Then $\rho^{zl}(\mathbf{c})=\xi^r\mathbf{c}$ if and only if
\begin{equation}\label{eq3}
  \rho^{zl}(\mathbf{c}_{t_{j_v}})=\xi^r\mathbf{c}_{t_{j_v}} \ \ \ \mbox{for} \ \ \ \ v=1,2,\ldots,u.
\end{equation}
From the proof of Lemma \ref{lem4}, we have that the equalities (\ref{eq3}) hold if and only if
$$\frac{q-1}{r}\bigg|\frac{m}{\gcd(m,i_{t_{j_v}})}\ \ \ \mbox{for} \ \ \ \ v=1,2,\ldots,u.$$
It follows from the proof of Theorem \ref{thm1} that if $\frac{q-1}{r}$ is a divisor $\frac{m}{\gcd(m,i_{t_{j_v}})}$ for $v=1,2,\ldots,u$, then
$$|\mbox{Fix}(\sigma^r_\xi)|=\frac{\gcd(m,i_{t_{j_1}},i_{t_{j_2}},\ldots,i_{t_{j_u}})\prod_{v=1}^u(q^{K_{t_{j_v}}}-1)}{m};$$
otherwise, $|\mbox{Fix}(\sigma^r_\xi)|=0$. Therefore,
\begin{equation*}
  \begin{split}
  s_{j_1j_2\cdots j_u}=&|(\langle\rho^l\rangle\times M)\backslash \mathcal{C}_{j_1j_2\cdots j_u}^\sharp|\\
=& \frac{1}{q-1}\sum_{r|(q-1)}\varphi\bigg(\frac{q-1}{r}\bigg)|\mbox{Fix}(\sigma^r_\xi)| \\
     =&\frac{\gcd(m,i_{t_{j_1}},i_{t_{j_2}},\ldots,i_{t_{j_u}})\prod_{v=1}^u(q^{K_{t_{j_v}}}-1)}{m(q-1)}\cdot\\ &\sum_{r|(q-1),\frac{q-1}{r}\big|\frac{m}{\gcd(m,i_{t_{j_v}})},v=1,2,\ldots,u}\varphi(\frac{q-1}{r})\\
     =&\frac{\gcd(m,i_{t_{j_1}},i_{t_{j_2}},\ldots,i_{t_{j_u}})\prod_{v=1}^u(q^{K_{t_{j_v}}}-1)}{m(q-1)}\cdot\\ &\gcd\bigg(q-1,\frac{m}{\gcd(m,i_{t_{j_1}})},\ldots,\frac{m}{\gcd(m,i_{t_{j_u}})}\bigg).
   \end{split}
\end{equation*}
Therefore, the number of orbits of $\langle \rho^l,M\rangle$ on $\mathcal{C}^*=\mathcal{C}\setminus \{\mathbf{0}\}$ is equal to
$$\sum_{\{j_1,j_2,\ldots,j_u\}\subseteq \{1,2,\ldots,U\},1\leq<j_1<j_2<\cdots<j_u\leq U}\frac{\gcd(m,i_{t_{j_1}},i_{t_{j_2}},\ldots,i_{t_{j_u}})\prod_{v=1}^u(q^{K_{t_{j_v}}}-1)}{m(q-1)}$$
$$\cdot \gcd\bigg(q-1,\frac{m}{\gcd(m,i_{t_{j_1}})},\ldots,\frac{m}{\gcd(m,i_{t_{j_u}})}\bigg).$$

Let $U=2$, we have
$$\langle\rho^l, M\rangle\backslash \mathcal{C}^*=s'_{t_1t_2}+s'_{t_1}+s'_{t_2}.$$
By Lemma \ref{lem4}, we see that
\begin{equation*}
  \begin{split}
    s'_{t_1t_2} =& \frac{\gcd(m,i_{t_1},i_{t_2})(q^{K_{t_1}}-1)(q^{K_{t_2}}-1)}{m(q-1)}\cdot \gcd\bigg(q-1,\frac{m}{\gcd(m,i_{t_1})},\frac{m}{\gcd(m,i_{t_2})}\bigg), \\
     s'_{t_1} =& \frac{\gcd(m,i_{t_1})(q^{K_{t_1}}-1)}{m(q-1)}\cdot \gcd\bigg(q-1,\frac{m}{\gcd(m,i_{t_1})}\bigg),\\
      s'_{t_2} =& \frac{\gcd(m,i_{t_2})(q^{K_{t_2}}-1)}{m(q-1)}\cdot \gcd\bigg(q-1,\frac{m}{\gcd(m,i_{t_2})}\bigg),
  \end{split}
\end{equation*}
giving the desired result.
\end{proof}
\subsection{Proof of Theorem \ref{thm3} and Corollary \ref{cor3}}
It is easy to check that $\mathcal{C}^*$ is equal to
$$\bigsqcup_{\{j_1,j_2,\ldots,j_u\}\subseteq \{1,2,\ldots,U\},1\leq<j_1<j_2<\cdots<j_u\leq U}C_{t_{j_1}}\setminus \{\mathbf{0}\}\oplus C_{t_{j_2}}\setminus \{\mathbf{0}\}\oplus\cdots\oplus C_{t_{j_u}}\setminus \{\mathbf{0}\},$$
which is a disjoint union. For all $j_1,j_2,\ldots,j_u$, $C_{t_{j_v}}$ is a linear code over $R_m\varepsilon_{t_{j_v}}$ of dimension $1$ and length $l$ with $1\leq v\leq u$. Let $s_{j_1j_2\cdots j_u}$ be the number of orbits of $\langle \mu_q,\rho^{l}, M\rangle$ acting on
$$C_{t_{j_1}}\setminus \{\mathbf{0}\}\oplus C_{t_{j_2}}\setminus \{\mathbf{0}\}\oplus\cdots\oplus C_{t_{j_u}}\setminus \{\mathbf{0}\},$$
which is denoted by $\mathcal{C}_{j_1j_2\cdots j_u}^\sharp$. Thus the group $\langle \mu_q,\rho^{l}, M\rangle$ can act on the set $\mathcal{C}_{j_1j_2\cdots j_u}^\sharp$ in the same way as the group action on $\mathcal{C}$. Then, we have
$$\mathcal{C}^*=\bigsqcup_{\{j_1,j_2,\ldots,j_u\}\subseteq \{1,2,\ldots,U\},1\leq<j_1<j_2<\cdots<j_u\leq U}\mathcal{C}_{j_1j_2\cdots j_u}^\sharp$$
and
$$|(\langle \mu_q,\rho^{l}, M\rangle) \backslash \mathcal{C}^*|=\sum_{\{j_1,j_2,\ldots,j_u\}\subseteq \{1,2,\ldots,U\},1\leq<j_1<j_2<\cdots<j_u\leq U}s_{j_1j_2\cdots j_u}.$$
It is enough to compute the number of orbits of the group $\langle \mu_q,\rho^{l}, M\rangle$ acting on $\mathcal{C}_{j_1j_2\cdots j_u}^\sharp$. According to Eq. (\ref{eq4}), the number of orbits of $\langle \mu_q,\rho^{l}, M\rangle$ on $\mathcal{C}_{j_1j_2\cdots j_u}^\sharp$ is equal to
$$  |\langle \mu_q,\rho^l,M\rangle\backslash \mathcal{C}_{j_1j_2\cdots j_u}^\sharp|=\frac{1}{m'm(q-1)}\sum_{r_1=0}^{m'-1}\sum_{r_2=0}^{m-1}\sum_{a\in \F_q^*}\big|\{\mathbf{c}\in \mathcal{C}_{j_1j_2\cdots j_u}^\sharp|\mu_q^{r_1}\rho^{r_2l}\sigma_a(\mathbf{c})=\mathbf{c}\}\big|.$$

Let $\mathbf{c}=\mathbf{c}_{t_{j_1}}+\mathbf{c}_{t_{j_2}}+\cdots+\mathbf{c}_{t_{j_u}}\in \mathcal{C}_{j_1j_2\cdots j_u}^\sharp$, where $\mathbf{c}_{t_{j_v}}\in C_{t_{j_v}}\setminus \{\mathbf{0}\}\subseteq (R_m\varepsilon_{t_{j_v}})^l$ for $v=1,2,\ldots,u$. Then $\mu_q^{r_1}\rho^{r_2l}\sigma_a(\mathbf{c})=\mathbf{c}$ if and only if
\begin{equation}\label{eq5}
  \mu_q^{r_1}\rho^{r_2l}\sigma_a(\mathbf{c}_{t_{j_v}})=\mathbf{c}_{t_{j_v}} \ \ \ \mbox{for} \ \ \ \ v=1,2,\ldots,u.
\end{equation}
Since $C_{t_{j_v}}$ is a $[l,1]$-linear code over $R_m\varepsilon_{t_{j_v}}$ with generator matrix $$[\mathbf{a}_{j_v,0}(x),\mathbf{a}_{j_v,1}(x),\ldots,\mathbf{a}_{j_v,l-1}(x)],$$
where $\mathbf{a}_{j_v,v'}\in \{\mathbf{0},\mathbf{f}_{j_v}(x)\}$ for $0\leq v'\leq l-1$, each component of $\mathbf{c}_{t_{j_v}}$ is $\mathbf{0}$ or $\mathbf{F}_v(x)$, where $\mathbf{F}_v(x)=\sum_{v'=0}^{k_{t_{j_v}}-1}f_{v,v'}\zeta^{v'i_{t_{j_v}}}\in R_m\varepsilon_{t_{j_v}}$. Hence, the Eq. (\ref{eq5}) is equivalent to
$$\mu_q^{r_1}\rho^{r_2}\sigma_a(\mathbf{F}_v(x))=\mathbf{F}_v(x)\ \ \ \mbox{for} \ \ \ \ v=1,2,\ldots,u,$$
which is equivalent to
$$a(f_{v,0}+f_{v,1}\zeta^{i_{t_{j_v}}}+\cdots+f_{v,k_{t_{j_v}}-1}\zeta^{(k_{t_{j_v}}-1)i_{t_{j_v}}})^{q^{r_1-1}}=\zeta^{-i_{t_{j_v}}q^{{r_1}_{r_2}}} \ \ \mbox{for}  \ \ v=1,2,\ldots,u,$$
by the proof of Lemma \ref{lem6}. For $1\leq v\leq u$, the minimal polynomial of $\zeta^{i_{t_{j_v}}}$ over $\F_q$ is of degree $k_{t_{j_v}}$, and so the set
$$\{f_{v,0}+f_{v,1}\zeta^{i_{t_{j_v}}}+\cdots+f_{v,k_{t_{j_v}}-1}\zeta^{(k_{t_{j_v}}-1)i_{t_{j_v}}}|f_{v,v'}\in \F_q, 0\leq v'\leq k_{t_{j_{v}}}-1\}$$
forms a subfield $\F_{q^{k_{t_{j_{v}}}}}$ of $\F_{q^{m'}}$. Then the number of $\mathbf{c}\in \mathcal{C}_{j_1j_2\cdots j_u}^\sharp$ satisfying $\mu_q^{r_1}\rho^{r_2l}\sigma_a(\mathbf{c})=\mathbf{c}$ is equal to the number of $u$-tuples $(\alpha_{t_{j_1}},\alpha_{t_{j_2}},\ldots,\alpha_{t_{j_u}})$ with $\alpha_{t_{j_v}}\in \F^*_{q^{k_{t_{j_v}}}}$ such that $a\alpha_{t_{j_v}}^{q^{r_1-1}}=\zeta^{-i_{t_{j_v}}q^{{r_1}_{r_2}}}$ for all $1\leq v\leq u$, which is easily checked to be $0$ or $\prod_{v=1}^u(q^{\gcd(k_{t_{j_v}},r_1)}-1)$. By the proof of \cite[Lemma 3.1]{CFL}, we have the following two facts:
\begin{enumerate}
\item For $1\leq v\leq u$, let $\F^*_{q^{k_{j_v}}}$ be generated by $\theta_{t_{j_v}}$. Let $\F_q$ be generated by $\xi$. For $0\leq r_1\leq m'-1$, denote $S(r_1)$ by
    $$\left\{0\leq r_2\leq m-1|\exists 0\leq r_3\leq q-2\ \ s.t.\ \ \zeta^{-i_{t_{j_v}}q^{{r_1}_{r_2}}}\in \langle \theta^{q^{r_1}-1}_{t_{j_v}}\rangle\ \ \mbox{for}\ \ \mbox{all}\ \ 1\leq v\leq u\right\}.$$
    Then,
    \begin{equation*}
      \begin{split}
         |S(r_1)|\leq & \gcd\bigg(m,\frac{i_{t_{j_1}}II_{t_{j_1}}}{\gcd(I,I_{t_{j_1}})},\cdots,\frac{i_{t_{j_u}}II_{t_{j_u}}}{\gcd(I,I_{t_{j_u}})},\frac{(i_{t_{j_2}}-i_{t_{j_1}})I_{t_{j_1}}I_{t_{j_2}}}{\gcd(I_{t_{j_1}},I_{t_{j_2}})}, \\
      & \cdots, \frac{(i_{t_{j_u}}-i_{t_{j_1}})I_{t_{j_1}}I_{t_{j_u}}}{\gcd(I_{t_{j_1}},I_{t_{j_u}})},\cdots,\frac{(i_{t_{j_u}}-i_{t_{j_{u-1}}})I_{t_{j_{u-1}}}I_{t_{j_u}}}{\gcd(I_{t_{j_{u-1}}},I_{t_{j_u}})}\bigg),
       \end{split}
    \end{equation*}
    where $I=q-1$ and $I_{t_{j_{v}}}=\frac{q^{k_{t_{j_{v}}}}-1}{q^{\gcd(k_{t_{j_{v}}},r)}-1}$ for $v=1,2,\ldots,u$.
\item Suppose $r_2\in S(r_1)$ and denote $R(r_1,r_2)$ by
$$\left\{0\leq r_3\leq q-2|\zeta^{-i_{t_{j_v}}q^{{r_1}_{r_2}}}\in \langle \theta^{q^{r_1}-1}_{t_{j_v}}\rangle\ \ \mbox{for}\ \ \mbox{all}\ \ 1\leq v\leq u\right\}.$$
Then, $|R(r_1,r_2)|=\gcd(I,I_{t_{j_1}},\ldots,{t_{j_u}})$.
\end{enumerate}
According to these two facts and the similar calculation as in \cite[Lemma 3.1]{CFL}, we have that
\begin{equation*}
  \begin{split}
    s_{j_1j_2\cdots j_u}= & |\langle \mu_q,\rho^l,M\rangle\backslash \mathcal{C}_{j_1j_2\cdots j_u}^\sharp| \\
    =&\frac{1}{m'm(q-1)}\sum_{r_1=0}^{m'-1}\sum_{r_2\in S(r_1)}\sum_{r_3\in R(r_1,r_2)}\prod_{v=1}^u(q^{\gcd(k_{t_{j_v}},r_1)}-1)\\
      =&\frac{1}{m'm(q-1)}\sum_{r_1=0}^{m'-1}|S(r_1)||R(r_1,r_2)|\prod_{v=1}^u(q^{\gcd(k_{t_{j_v}},r_1)}-1)\\
      \leq & \frac{1}{m'm(q-1)}\sum_{r=0}^{m'-1}\gcd\bigg(m,\frac{i_{t_{j_1}}II_{t_{j_1}}}{\gcd(I,I_{t_{j_1}})},\cdots,\frac{i_{t_{j_u}}II_{t_{j_u}}}{\gcd(I,I_{t_{j_u}})},\frac{(i_{t_{j_2}}-i_{t_{j_1}})I_{t_{j_1}}I_{t_{j_2}}}{\gcd(I_{t_{j_1}},I_{t_{j_2}})}, \\
      & \cdots, \frac{(i_{t_{j_u}}-i_{t_{j_1}})I_{t_{j_1}}I_{t_{j_u}}}{\gcd(I_{t_{j_1}},I_{t_{j_u}})},\cdots,\frac{(i_{t_{j_u}}-i_{t_{j_{u-1}}})I_{t_{j_{u-1}}}I_{t_{j_u}}}{\gcd(I_{t_{j_{u-1}}},I_{t_{j_u}})}\bigg)\gcd(I,I_{t_{j_1}},\ldots,{t_{j_u}})\\
      &\cdot\prod_{v=1}^u(q^{\gcd(k_{t_{j_{v}}},r)}-1),
  \end{split}
\end{equation*}
Therefore, the number of orbits of $\langle \mu_q,\rho^l,M\rangle$ on $\mathcal{C}^*=\mathcal{C}\setminus \{\mathbf{0}\}$ is obtained.

Let $U=2$, we have
$$|\langle \mu_q,\rho^l,M\rangle\backslash \mathcal{C}^*|=s_{t_1}+s_{t_2}+s_{t_1,t_2}.$$
According to the proofs of Lemma \ref{lem6} and \cite[Theorem 3.3]{CFL}, we have
$$s_{t_{v}}=\frac{1}{k_{t_{l'}}}\sum_{r|k_{t_{v}}}\varphi(\frac{k_{t_{v}}}{r})\gcd(q^r-1,\frac{q^{k_{t_{v}}}-1}{q-1},\frac{i_{t_{v}}(q^{k_{t_{v}}}-1)}{m})\ \ \ \ \mbox{for} \ \ \ \ v=1,2,$$
\begin{equation*}
  \begin{split}
   s_{t_1,t_2} = & \frac{1}{m'}\sum_{r=0}^{m'-1}\gcd\bigg(\big(q^{\gcd(k_{t_1},r)}-1)\gcd(q^{\gcd(k_{t_2},r)}-1,\frac{(q^{k_{t_1}}-1)(q^{\gcd(k_{t_2},r)}-1)}{(q-1)(q^{\gcd(k_{t_1},r)}-1)},\\
      & \frac{i_{t_1}(q^{k_{t_1}}-1)(q^{\gcd(k_{t_2},r)}-1)}{m{(q^{\gcd(k_{t_1},r)}-1)}},\frac{i_{t_2}(q^{k_{t_2}}-1)}{m}\big),\frac{(i_{t_2}-i_{t_1})(q^{k_{t_1}}-1)(q^{k_{t_2}}-1)}{m(q-1)}\bigg),
  \end{split}
\end{equation*}
which gives the desired result.

\section{Remarks and examples}\label{sec:5}
\begin{rem}
The reference \cite[Theorem 5]{M2} says that if $\mathcal{C}$ is a $[n=lm,K]$ strongly quasi-cyclic code of co-index $m$ over $\F_q$, then $s(\mathcal{C})\leq \frac{q^K-1}{m}$. If $\gcd(m,i_t)=1$, then Lemma \ref{lem1} generalizes and improves \cite[Theorem 5]{M2} by removing the constrain ``strongly" and characterizing a necessary and sufficient condition for the codes meeting bounds.
\end{rem}

We include three examples to show that the upper bounds given in Lemma \ref{lem1} and Theorem \ref{thm1} are tight.
\begin{exa}\label{e1}
Take $m=9$, $l=2$ and $q=2$ in Lemma \ref{lem1}. All the distinct $2$-cyclotomic cosets modulo $9$ are given by
$$\Gamma_0=\{0\},\Gamma_{1}=\{1,2,4,5,7,8\},\Gamma_2=\{3,6\}.$$
Consider the linear code $\mathcal{C}$ over $R_m\varepsilon_2$, where the primitive idempotent $\varepsilon_2$ corresponds to $\Gamma_2$. Suppose $\zeta$ is a primitive $m$-th root of unity. Actually, let $h(x)=\prod_{r\in \Gamma_2}(x-\zeta^r)=x^2+x+1$, then $g(x)=(x^m-1)/h(x)$ is a generator polynomial of $R_m\varepsilon_2$. Let $[1,g(x)]$ be the generator matrix of $\mathcal{C}$ over $R_m\varepsilon_2$. Then $K=1\cdot |\Gamma_2|=2$. By Lemma \ref{lem1}, we have
$$s(\mathcal{C})\leq \frac{\gcd(m,i_t)(q^K-1)}{m}=\frac{\gcd(9,3)(2^2-1)}{9}=1.$$
Hence, the number of nonzero weights of $\mathcal{C}$ must be equal to $1$. Moreover, Lemma \ref{lem1} also tells us that all the nonzero codewords of $\mathcal{C}$ are in the same $\langle\rho^l\rangle$-orbit.
\end{exa}

\begin{exa}\label{e2}
Take $m=15$, $l=3$ and $q=2$ in Lemma \ref{lem1}. All the distinct $2$-cyclotomic cosets modulo $15$ are given by
$$\Gamma_0=\{0\},\Gamma_{1}=\{1,2,4,8\},\Gamma_2=\{3,6,9,12\},\Gamma_3=\{7,11,13,14\},\Gamma_4=\{5,10\}.$$
Consider the linear code $\mathcal{C}$ over $R_m\varepsilon_0$, where the primitive idempotent $\varepsilon_0$ corresponds to $\Gamma_0$. Actually, let $h(x)=x-1$, then $g(x)=(x^m-1)/h(x)$ is a generator polynomial of $R_m\varepsilon_0$. Let
$$\left(
    \begin{array}{ccc}
      1 & 0 & g(x) \\
      0 & 1 & 0 \\
    \end{array}
  \right)
$$
be the generator matrix of $\mathcal{C}$ over $R_m\varepsilon_0$. Then $K=2\cdot |\Gamma_0|=2$. By Lemma \ref{lem1}, we have
$$s(\mathcal{C})\leq \frac{\gcd(m,i_t)(q^K-1)}{m}=\frac{\gcd(15,0)(2^2-1)}{15}=3.$$
Using the Magma software programming \cite{WJC}, we see that the weight distribution of the quasi-cyclic code $\mathcal{C}$ is $1+x^{15}+x^{30}+x^{45}$, showing that the exact value of $s(\mathcal{C})=3$.
\end{exa}

\begin{exa}\label{e3}
Take $m=9$, $l=2$ and $q=2$ in Theorem \ref{thm1}. All the distinct $2$-cyclotomic cosets modulo $9$ are as shown in Example \ref{e1}. Consider the quasi-cyclic code $\mathcal{C}=C_0\oplus C_2$, where $C_0$ is a linear code over $R_m\varepsilon_0$ and $C_2$ is a linear code over $R_m\varepsilon_2$, where the primitive idempotent $\varepsilon_0$ and $\varepsilon_2$ corresponds to $\Gamma_0$ and $\Gamma_2$, respectively. Actually, let $h_1(x)=x+1$ and $h_2(x)=x^2+x+1$, then $g_1(x)=(x^m-1)/h_1(x)$ and $g_2(x)=(x^m-1)/h_2(x)$ are the generator polynomial of $R_m\varepsilon_0$ and $R_m\varepsilon_2$, respectively. Let $[1,g_i(x)]$ be the generator matrix of $C_i$ over $R_m\varepsilon_i$ with $i=0,2$. Then $K_1=1\cdot |\Gamma_0|=1$ and $K_2=1\cdot |\Gamma_2|=2$. By Theorem \ref{thm1}, we have
$$s(\mathcal{C})\leq \frac{\gcd(9,0,3)(2-1)(2^2-1)}{9}+\frac{\gcd(9,0)(2-1)}{9}+\frac{\gcd(9,3)(2^2-1)}{9}=3.$$
Using the Magma software programming \cite{WJC}, we see that the weight distribution of the quasi-cyclic code $\mathcal{C}$ is $1+3x^{6}+3x^{12}+x^{18}$, showing that the exact value of $s(\mathcal{C})=3$. Moreover, Lemma \ref{lem1} also tells us that any two nonzero codewords of $\mathcal{C}$ with the same weight are in the same $\langle\rho^l\rangle$-orbit.
\end{exa}

\begin{rem}
The reference \cite[Theorem 3]{M2} says that if $\mathcal{C}$ is a $[n=lm,K]$ strongly quasi-cyclic code of co-index $m$ over $\F_q$, then
$$s(\mathcal{C})\leq \frac{l(q^K-1)}{\mbox{lcm}(q-1,n)}=\frac{\gcd(lm,q-1)(q^K-1)}{m(q-1)}.$$
If $\gcd(m,i_t)=1$, then Lemma \ref{lem4} says that
$$s(\mathcal{C})\leq \frac{\gcd(m,q-1)(q^K-1)}{m(q-1)}\leq\frac{\gcd(lm,q-1)(q^K-1)}{m(q-1)}.$$
Therefore, Lemma \ref{lem4} generalizes and improves \cite[Theorem 3]{M2} by removing the constrain ``strongly" and characterizing a necessary and sufficient condition for the codes meeting bounds.
\end{rem}

Next, we also include three examples to show that the upper bounds given in Lemma \ref{lem4} and Theorem \ref{thm2} are tight.
\begin{exa}
Take $m=91$, $l=2$ and $q=9$ in Lemma \ref{lem4}. $\Gamma_2=\{8,72,11\}$ is the $9$-cyclotomic coset modulo $91$ containing $8$. Consider the linear code $\mathcal{C}$ over $R_m\varepsilon_2$, where the primitive idempotent $\varepsilon_2$ corresponds to $\Gamma_2$. Suppose $g(x)$ is a generator polynomial of $R_m\varepsilon_2$. Let $[1,g(x)]$ be the generator matrix of $\mathcal{C}$ over $R_m\varepsilon_2$. Then $K=1\cdot |\Gamma_2|=3$. By Lemma \ref{lem4}, we have
$$s(\mathcal{C})\leq \frac{\gcd(m,(q-1)i_t)(q^K-1)}{m(q-1)}=\frac{\gcd(31,(9-1)8)(9^3-1)}{91(9-1)}=1.$$
Hence, the number of nonzero weights of $\mathcal{C}$ must be equal to $1$. Moreover, Lemma \ref{lem1} also tells us that all the nonzero codewords of $\mathcal{C}$ are in the same $\langle \rho^l,M\rangle$-orbit.
\end{exa}

\begin{exa}
Take $m=39$, $l=2$ and $q=5$ in Lemma \ref{lem4}. $\Gamma_1=\{1,5,8,25\}$ is the $5$-cyclotomic coset modulo $39$ containing $1$. Consider the linear code $\mathcal{C}$ over $R_m\varepsilon_1$, where the primitive idempotent $\varepsilon_1$ corresponds to $\Gamma_1$. Suppose $g(x)$ is a generator polynomial of $R_m\varepsilon_1$. Let $[1,g(x)]$ be the generator matrix of $\mathcal{C}$ over $R_m\varepsilon_2$. Then $K=1\cdot |\Gamma_1|=4$. By Lemma \ref{lem4}, we have
$$s(\mathcal{C})\leq \frac{\gcd(m,(q-1)i_t)(q^K-1)}{m(q-1)}=\frac{\gcd(39,(5-1)1)(5^4-1)}{39(5-1)}=4.$$
Using the Magma software programming \cite{WJC}, we see that the weight distribution of the quasi-cyclic code $\mathcal{C}$ is $1+156x^{59}+156x^{62}+156x^{63}+156x^{66}$, showing that the exact value of $s(\mathcal{C})=4$. Moreover, Lemma \ref{lem1} also tells us that any two nonzero codewords of $\mathcal{C}$ with the same weight are in the same $\langle \rho^l,M\rangle$-orbit.
\end{exa}

\begin{exa}
Take $m=26$, $l=2$ and $q=3$ in Theorem \ref{thm2}. All the distinct $3$-cyclotomic cosets modulo $26$ are given by
$$\Gamma_0=\{0\},\Gamma_{1}=\{1,3,9\},\Gamma_2=\{2,4,6\},\Gamma_3=\{4,10,12\},\Gamma_4=\{5,15,19\},$$
$$\Gamma_5=\{13\},\Gamma_{6}=\{7,11,21\},\Gamma_7=\{8,20,24\},\Gamma_8=\{14,16,22\},\Gamma_9=\{17,23,25\}.$$
Consider the quasi-cyclic code $\mathcal{C}=C_1\oplus C_5$, where $C_1$ is a linear code over $R_m\varepsilon_1$ and $C_5$ is a linear code over $R_m\varepsilon_5$, where the primitive idempotent $\varepsilon_1$ and $\varepsilon_5$ corresponds to $\Gamma_1$ and $\Gamma_5$, respectively. Let $g_1(x)$ and $g_2(x)$ be the generator polynomial of $R_m\varepsilon_1$ and $R_m\varepsilon_5$, respectively. Let $[1,g_1(x)]$ be the generator matrix of $C_1$ over $R_m\varepsilon_1$, and $[0,g_2(x)]$ be the generator matrix of $C_5$ over $R_m\varepsilon_5$. Then $K_1=1\cdot |\Gamma_1|=3$ and $K_2=1\cdot |\Gamma_5|=1$. By Theorem \ref{thm2}, we have
\begin{equation*}
  \begin{split}
     s(\mathcal{C}) \leq & \frac{\gcd(26,1,13)(3^3-1)(3-1)}{26(3-1)}\cdot \gcd\bigg(3-1,\frac{26}{\gcd(26,1)},\frac{26}{\gcd(26,13)}\bigg) \\
      & +\frac{\gcd(26,1)(3^3-1)}{26(3-1)}\cdot \gcd\bigg(3-1,\frac{26}{\gcd(26,1)}\bigg)\\
      & +\frac{\gcd(26,13)(3-1)}{26(3-1)}\cdot \gcd\bigg(3-1,\frac{26}{\gcd(26,13)}\bigg)\\
      =& 4.
  \end{split}
\end{equation*}
Using the Magma software programming \cite{WJC}, we see that the weight distribution of the quasi-cyclic code $\mathcal{C}$ is $1+2x^{26}+26x^{32}+26x^{36}+26x^{38}$, showing that the exact value of $s(\mathcal{C})=4$. Moreover, Lemma \ref{lem1} also tells us that any two nonzero codewords of $\mathcal{C}$ with the same weight are in the same $\langle \rho^l,M\rangle$-orbit.
\end{exa}

\begin{rem}
Let $\mathcal{C}$ be a one-generator quasi-cyclic code over $\F_q$. In Theorem \ref{thm3}, we consider that $\langle \mu_q,\rho^l,M\rangle$ is a subgroup of $Aut(\mathcal{C})$ which is larger than the automorphism groups $\langle \rho^l\rangle$ and $\langle \rho^l,M\rangle$. Therefore, the upper bound in Theorem \ref{thm3} is tighter than that in Theorems \ref{thm1} and \ref{thm2} if $\mathcal{C}$ is a one-generator quasi-cyclic code.
\end{rem}

Next, we also include three examples to show that the upper bounds given in Lemma \ref{lem6} and Theorem \ref{thm3} are tight, and also are compared with that in Lemma \ref{lem4} and Theorem \ref{thm2}.

\begin{exa}
Take $m=11$, $l=2$ and $q=4$ in Lemma \ref{lem6}. All the distinct $4$-cyclotomic cosets modulo $11$ are given by
$$\Gamma_0=\{0\},\Gamma_{1}=\{1,3,4,5,9\},\Gamma_2=\{2,6,7,8,10\}.$$
Consider the linear code $\mathcal{C}$ over $R_m\varepsilon_1$, where the primitive idempotent $\varepsilon_1$ corresponds to $\Gamma_1$. Suppose $g(x)$ is a generator polynomial of $R_m\varepsilon_1$. Let $[0,g(x)]$ be the generator matrix of $\mathcal{C}$ over $R_m\varepsilon_1$. Then $K=1\cdot |\Gamma_1|=5$. By Lemma \ref{lem4}, we have
$$s(\mathcal{C})\leq \frac{\gcd(m,(q-1)i_t)(q^K-1)}{m(q-1)}=\frac{\gcd(11,(4-1)1)(4^5-1)}{11(4-1)}=31.$$
Using Lemma \ref{lem6}, we have
\begin{equation*}
  \begin{split}
    s(\mathcal{C}) & \leq\frac{1}{5}\sum_{r|5}\varphi(\frac{5}{r})\gcd(4^r-1,\frac{4^5-1}{4-1},\frac{1\cdot(4^5-1)}{11}) \\
      & =\frac{1}{5}(\varphi(5)+31\varphi(1))=\frac{1}{5}(4+31)=7.
  \end{split}
\end{equation*}
Using the Magma software programming \cite{WJC}, we see that the weight distribution of the quasi-cyclic code $\mathcal{C}$ is $1+165x^{6}+165x^{7}+165x^{8}+330x^{9}+165x^{10}+33x^{11}$, showing that the exact value of $s(\mathcal{C})=6$.
\end{exa}

\begin{exa}
Take $m=9$, $l=2$ and $q=2$ in Theorem \ref{thm3}. All the distinct $2$-cyclotomic cosets modulo $9$ are as shown in Example \ref{e1}. Consider the quasi-cyclic code $\mathcal{C}=C_0\oplus C_1$, where $C_0$ is a linear code over $R_m\varepsilon_0$ and $C_1$ is a linear code over $R_m\varepsilon_1$, where the primitive idempotent $\varepsilon_0$ and $\varepsilon_1$ corresponds to $\Gamma_0$ and $\Gamma_1$, respectively. Let $g_1(x)$ and $g_2(x)$ be the generator polynomial of $R_m\varepsilon_0$ and $R_m\varepsilon_1$, respectively. Let $[0,g_1(x)]$ be the generator matrix of $C_0$ over $R_m\varepsilon_0$, and $[g_2(x),0]$ be the generator matrix of $C_1$ over $R_m\varepsilon_1$. Then $K_1=1\cdot |\Gamma_0|=1$ and $K_2=1\cdot |\Gamma_1|=6$. By Theorem \ref{thm2}, we have
\begin{equation*}
  \begin{split}
     s(\mathcal{C}) \leq & \frac{\gcd(9,0,1)(2-1)(2^6-1)}{9(2-1)}\cdot \gcd\bigg(2-1,\frac{9}{\gcd(9,0)},\frac{9}{\gcd(9,1)}\bigg) \\
      & +\frac{\gcd(9,0)(2-1)}{9(2-1)}\cdot \gcd\bigg(2-1,\frac{9}{\gcd(9,0)}\bigg)\\
      & +\frac{\gcd(9,1)(2^6-1)}{9(2-1)}\cdot \gcd\bigg(2-1,\frac{9}{\gcd(9,1)}\bigg)\\
      =& 7+1+7=15.
  \end{split}
\end{equation*}
Using Theorem \ref{thm3} and Corollary \ref{cor3}, we have
\begin{equation*}
  \begin{split}
    s(\mathcal{C})\leq & 1+\frac{1}{6}\sum_{r|6}\varphi(\frac{6}{r})\gcd\bigg(2^r-1,\frac{2^6-1}{2-1},\frac{1\cdot(2^6-1)}{9}\bigg) \\
    &+\frac{1}{6}\sum_{r=0}^5\gcd\bigg(\gcd\big(2^{\gcd(6,r)}-1,2^{\gcd(6,r)}-1,\frac{2^6-1}{9}\big),\frac{2^6-1}{9}\bigg)\\
      =& 1+3+3=7.
  \end{split}
\end{equation*}
Using the Magma software programming \cite{WJC}, we see that the weight distribution of the quasi-cyclic code $\mathcal{C}$ is $1+9x^{2}+27x^{4}+27x^{6}+x^{9}+9x^{11}+27x^{13}+27x^{15}$, showing that the exact value of $s(\mathcal{C})=7$. Moreover, Theorem \ref{thm3} also tells us that any two nonzero codewords of $\mathcal{C}$ with the same weight are in the same $\langle \mu_q,\rho^l,M\rangle$-orbit.
\end{exa}

\begin{exa}
Take $m=15$, $l=2$ and $q=2$ in Theorem \ref{thm3}. All the distinct $2$-cyclotomic cosets modulo $15$ are as shown in Example \ref{e2}. Consider the quasi-cyclic code $\mathcal{C}=C_2\oplus C_4$, where $C_2$ is a linear code over $R_m\varepsilon_2$ and $C_4$ is a linear code over $R_m\varepsilon_4$, where the primitive idempotent $\varepsilon_2$ and $\varepsilon_4$ corresponds to $\Gamma_2$ and $\Gamma_4$, respectively. Let $g_1(x)$ and $g_2(x)$ be the generator polynomial of $R_m\varepsilon_2$ and $R_m\varepsilon_4$, respectively. Let $[0,g_1(x)]$ be the generator matrix of $C_2$ over $R_m\varepsilon_2$, and $[g_2(x),0]$ be the generator matrix of $C_4$ over $R_m\varepsilon_4$. Then $K_2=1\cdot |\Gamma_2|=4$ and $K_4=1\cdot |\Gamma_4|=2$. By Theorem \ref{thm2}, we have
\begin{equation*}
  \begin{split}
     s(\mathcal{C}) \leq & \frac{\gcd(15,5,3)(2^2-1)(2^4-1)}{15(2-1)}\cdot \gcd\bigg(2-1,\frac{15}{\gcd(15,5)},\frac{15}{\gcd(15,3)}\bigg) \\
      & +\frac{\gcd(15,5)(2^2-1)}{15(2-1)}\cdot \gcd\bigg(2-1,\frac{15}{\gcd(15,5)}\bigg)\\
      & +\frac{\gcd(15,3)(2^4-1)}{15(2-1)}\cdot \gcd\bigg(2-1,\frac{15}{\gcd(15,3)}\bigg)\\
      =& 3+1+3=7.
  \end{split}
\end{equation*}
Using Theorem \ref{thm3} and Corollary \ref{cor3}, we have
\begin{equation*}
  \begin{split}
    s(\mathcal{C})\leq & \frac{1}{2}\sum_{r|2}\varphi(\frac{2}{r})\gcd\bigg(2^r-1,\frac{2^2-1}{2-1},\frac{5\cdot(2^2-1)}{15}\bigg)+\\
    &\frac{1}{4}\sum_{r|4}\varphi(\frac{4}{r})\gcd\bigg(2^r-1,\frac{2^4-1}{2-1},\frac{3\cdot(2^4-1)}{15}\bigg)+ \\
    &\frac{1}{4}\sum_{r=0}^3\gcd\bigg((2^{\gcd(2,r)}-1)\gcd\big(2^{\gcd(4,r)}-1,\frac{2^{\gcd(4,r)}-1}{2^{\gcd(2,r)}-1},3\big),6\bigg)\\
      =& 1+2+2=5.
  \end{split}
\end{equation*}
Using the Magma software programming \cite{WJC}, we see that the weight distribution of the quasi-cyclic code $\mathcal{C}$ is $1+10x^{6}+3x^{10}+5x^{12}+30x^{16}+15x^{22}$, showing that the exact value of $s(\mathcal{C})=5$. Moreover, Theorem \ref{thm3} also tells us that any two nonzero codewords of $\mathcal{C}$ with the same weight are in the same $\langle \mu_q,\rho^l,M\rangle$-orbit.
\end{exa}
\section{Conclusion}\label{sec:6}
In this paper, we establish an explicit upper bound on the number of nonzero weights of any quasi-cyclic code with simple-root by counting the number of orbits of $\langle \rho^l,M\rangle$ on the code ($\langle \mu_q,\rho^l,M\rangle$ on one-generator quasi-cyclic code); at the same time, we show that a quasi-cyclic code achieves the bound if and only if any two codewords with the same weight are in the same $\langle \rho^l,M\rangle$-orbit ($\langle \mu_q,\rho^l,M\rangle$-orbit). Many examples (see Section \ref{sec:5}) are included to show that our bound is tight. Our main result and its corollaries generalize and improve some of the results in \cite{M2}.

A possible direction for future work is to find tight upper bounds for the number of nonzero weights of quasi-cyclic codes with repeated-root.

\end{document}